\newtheorem{theorem}{Theorem}[section]
\newtheorem{lemma}[theorem]{Lemma}
\newtheorem{proposition}[theorem]{Proposition}
\newtheorem{corollary}[theorem]{Corollary}
\newtheorem{definition}[theorem]{Definition}
\newtheorem{observation}[theorem]{Observation}
\newtheorem{example}[theorem]{Example}
\crefname{equation}{eq.}{equations}
\crefname{definition}{def.}{definitions}
\crefname{section}{sec.}{sections}
\crefname{lemma}{lem.}{lemmata}
\crefname{corollary}{cor.}{corollaries}
\crefname{proposition}{prop.}{propositions}
\crefname{remark}{rem.}{remarks}
\crefname{enumi}{step}{steps}
\crefname{theorem}{thm.}{theorems}
\crefname{algorithm}{alg.}{algorithms}
\crefname{figure}{fig.}{figures}
\crefname{observation}{obs.}{observations}
\crefname{example}{ex.}{examples}
  \def\ZZ{{\mathbb Z}}
  \def\bm{\boldsymbol}
\def\Nz{\mathbb{N}_0}
\def\N{\mathbb{N}}
\def\Z{\mathbb{Z}}
\def\K{\mathbb{K}}
\def\Pr{\mathbb{P}}
\def\P{\mathcal{P}}
\def\A{\mathcal{A}}
\def\ev{\mathbbm{1}}
\def\MHB{\ensuremath{\mathtt{MHB}}\xspace}
\def\Res{\mathrm{Res}}
\def\Im{\mathrm{Im}}
\def\Ker{\mathrm{Ker}}
\newcommand{\mon}[1]{\bm{w^{#1}}}
\newcommand\possiblebreak{\ifhmode\unskip\space\hfil\penalty0\hfilneg\fi}
\newif\if@borderstar
\def\bordermatrix{\@ifnextchar*{%
  \@borderstartrue\@bordermatrix@i}{\@borderstarfalse\@bordermatrix@i*}%
}
\def\@bordermatrix@i*{\@ifnextchar[{%
  \@bordermatrix@ii}{\@bordermatrix@ii[()]}
}
\def\@bordermatrix@ii[#1]#2{%
  \begingroup
    \m@th\@tempdima8.75\p@\setbox\z@\vbox{%
      \def\cr{\crcr\noalign{\kern 2\p@\global\let\cr\endline }}%
      \ialign {$##$\hfil\kern 2\p@\kern\@tempdima & \thinspace %
      \hfil $##$\hfil && \quad\hfil $##$\hfil\crcr\omit\strut %
      \hfil\crcr\noalign{\kern -\baselineskip}#2\crcr\omit %
      \strut\cr}}%
    \setbox\tw@\vbox{\unvcopy\z@\global\setbox\@ne\lastbox}%
    \setbox\tw@\hbox{\unhbox\@ne\unskip\global\setbox\@ne\lastbox}%
    \setbox\tw@\hbox{%
      $\kern\wd\@ne\kern -\@tempdima\left\@firstoftwo#1%
        \if@borderstar\kern2pt\else\kern -\wd\@ne\fi%
      \global\setbox\@ne\vbox{\box\@ne\if@borderstar\else\kern 2\p@\fi}%
      \vcenter{\if@borderstar\else\kern -\ht\@ne\fi%
        \unvbox\z@\kern-\if@borderstar2\fi\baselineskip}%
        \if@borderstar\kern-2\@tempdima\kern2\p@\else\,\fi\right\@secondoftwo#1 $%
    }\null \;\vbox{\kern\ht\@ne\box\tw@}%
  \endgroup
}
\begin{document}

\title{Bilinear systems with two supports: \\
Koszul resultant matrices, \\ eigenvalues, and eigenvectors 
}

\author{Mat\'{i}as R. Bender${}^{1}$, Jean-Charles Faug\`ere${}^{1}$, \\ Angelos Mantzaflaris${}^2$, and Elias Tsigaridas${}^1$}

\date{ ${}^1$ Sorbonne Universit\'e, \textsc{CNRS},
  \textsc{INRIA},\linebreak Laboratoire d'Informatique de Paris~6,
  \textsc{LIP6}, \linebreak \'Equipe \textsc{PolSys}, \linebreak 4
  place Jussieu, F-75005, Paris, France \linebreak
  \\
  ${}^2$ Johann Radon Institute for Computational and Applied
  Mathematics (RICAM), \linebreak Austrian Academy of Sciences, Linz,
  Austria \linebreak
  \\
  March 2018\footnote{This work is based on a paper originally
    published in ISSAC ’18, July 16–19, 2018, New York, USA \cite{bender_2bilinear_2018}} }

\maketitle

\begin{abstract}
  A fundamental problem in computational algebraic geometry is the computation
  of the resultant. A central question is when and how to compute it
  as the determinant of a matrix.
  whose elements are the coefficients of the input polynomials up-to sign.
  This problem is well understood for unmixed
  multihomogeneous systems, that is for systems consisting of multihomogeneous
  polynomials with the same support.
  However, little is known for mixed systems,
  that is for systems consisting of polynomials with different supports.

  We consider the computation of the multihomogeneous resultant of
  bilinear systems involving two different supports. We present a
  constructive approach that expresses the resultant as the exact
  determinant of a \emph{Koszul resultant matrix}, that is a matrix
  constructed from maps in the Koszul complex.
  We exploit the resultant matrix to propose an algorithm to solve
  such systems. In the process we extend the classical eigenvalues
  and eigenvectors criterion to a more general setting. Our extension
  of the eigenvalues criterion applies to a general class of matrices,
  including the Sylvester-type and the Koszul-type ones.
\end{abstract}






\paragraph*{Keywords:} Resultant; Sparse Resultant; Determinantal formula;
  Bilinear system; Mixed Multihomogeneous system; Polynomial solving



\section{Introduction} 
\label{sec:intro}
  
The resultant is a central object in elimination theory and
computational algebraic geometry. We use it to decide when an
overdetermined polynomial system has a solution and to solve
well-defined (square) systems. Moreover, it is one of the few tools
that take into account the sparsity of supports of the polynomials.

Usually, we compute the resultant as a quotient of determinants of two
matrices
\cite{macaulay1902some,jouanolou1997formes,dandrea2001explicit,dandrea_macaulay_2002}.
If we can compute the resultant as a determinant of only one matrix
whose non-zero entries are forms evaluated at the coefficients of the
input polynomials, 
then we have a \emph{determinantal formula}.
Among these cases, the best we can hope for is to have linear forms.
In general, determinantal formulas do not exist and it is an open
problem to decide when they do.

The matrices appearing in the computation of resultants have a strong
structure and we can classify them according to it.  For a system
$(f_0,\dots,f_n)$, a \emph{Sylvester-type formula} is a matrix that
represents a map $(g_0,\dots,g_n) \mapsto \sum_i g_i\, f_i$. It
extends the classical Sylvester matrix and it corresponds to the last
map of the Koszul complex of $(f_0,\dots,f_n)$. Another kind of
formula is the \emph{Koszul-type formula} that involves the other maps
of the Koszul complex. We call the matrices related to this formula
\emph{Koszul resultant matrices}
\cite{mantzaflaris2017resultants,buse_matrix_2017}. For both formulas,
the elements of the matrices are linear polynomials in the
coefficients of $(f_0,\dots,f_n)$. Other important resultant matrices
include \emph{B\'{e}zout-} and \emph{Dixon-type}; we refer
to \cite{emiris1999} and references therein for details.
We consider \emph{Koszul-type determinantal formulas} for
mixed multihomogeneous bilinear systems with two supports.

A well-known tool to derive determinantal formulas
\cite{weyman1994multigraded,dickenstein2003multihomogeneous,EmiMan-mhomo-jsc-12,emiris2016bit,mantzaflaris2017resultants,buse_matrix_2017}
is the Weyman complex \cite{weyman_calculating_1994},
a generalization of the Koszul complex.
For an introduction we refer to \cite[Sec.~9.2]{weyman2003cohomology}
and \cite[Sec.~2.5.C, Sec.~3.4.E]{gelfand2008discriminants}.
We follow this approach.

For \emph{unmixed} multihomogeneous systems, that is systems where all the polynomial share
the same support, determinantal formulas are well studied, e.g.,
\cite{sturmfels1994multigraded,weyman1994multigraded,kapur1997extraneous,chtcherba2000conditions,dandrea2001explicit,weyman2003cohomology,dickenstein2003multihomogeneous}.
On the other hand, when we consider polynomials with different
supports, that is \emph{mixed systems}, little is known about
determinantal formulas; with the exception of scaled multihomogeneous
systems \cite{EmiMan-mhomo-jsc-12}, that is when the supports
are scaled copies of one of them, and the bivariate tensor-product
case \cite{mantzaflaris2017resultants,buse_matrix_2017}.

The resultant is also a tool to solve $0$-dimensional square
polynomial systems $(f_1,\dots,f_n)$. There are different variants,
for example by hiding a variable, or using the u-resultant; we refer
to \cite[Chp.~3]{cox2006using} for a general introduction.
When a \emph{Sylvester-type formula} is available, we can use the
corresponding resultant matrix to obtain the matrix of the multiplication map of
a polynomial $f_0$ in $\K[\bm{x}] / \langle f_1,\dots,f_n \rangle$.  Then,
we can solve the system
by computing the eigenvalues and
eigenvectors of the latter  matrix, e.g.,
\cite{auzinger1988elimination,emiris1996complexity}.
The eigenvalues correspond to the evaluation of $f_0$ at every zero of
the system. From the eigenvectors we can recover the coordinates of
the zeros.
%
To our knowledge similar techniques involving matrices coming from
Koszul-type formulas do not exist up to now.

We consider mixed bilinear polynomial systems.  On the one hand, this
is simplest case of mixed multihomogeneous systems where no resultant
formula was known.  On the other hand, bilinear, and their
generalization multilinear, polynomial systems are common in
applications, for example in cryptography
\cite{FauLevPer-crypto-08,Joux-index-14} and game theory
\cite{EmiVid14}.
We refer to \cite{Faugere2011}, see also \cite{spaenlehauer-phd-12},
for computing the roots of unmixed multilinear systems by means of
Gr\"obner bases, and to \cite{emiris2016bit} by using resultants. We
refer to \cite{bender_mixedGB_2018} for a Gr\"obner bases approach to
solve square mixed multihomogeneous systems.

\paragraph*{\textbf{Our contribution}}
We introduce a new algorithm to solve square mixed multihomogeneous
systems consisting of bilinear polynomials with two different
supports. It relies on eigenvalues and eigenvectors computations.
Following classic resultant techniques we add a polynomial, $f_0$,  to make the system overdetermined.
The polynomial $f_0$ must be trilinear, as this is simplest one that can
separate the roots.
Then, we introduce a determinantal formula for the resultant of this
overdetermined system.  This is the first determinantal formula for a
mixed multilinear polynomial system.
Using Weyman's complex, we derive a \emph{Koszul-type} formula and
compute the resultant as the determinant of a \emph{Koszul resultant
  matrix}.

We present a general extension of the eigenvalue criterion that works
for a general class of formulas (see \Cref{def:piTheta}), which
include the Koszul-type and Sylvester-type formulas as special cases.
We consider a square matrix $M$ whose determinant is a multiple of the
resultant of a system $(f_0,\dots,f_n)$. If there is a monomial
$\bm{x}^\sigma$ in $f_0$ such that we can partition $M$ as
$\bigl[\begin{smallmatrix}M_{1,1} & M_{1,2} \\ M_{2,1} &
  M_{2,2} \end{smallmatrix} \bigr]$ where $M_{1,1}$ is invertible, the
coefficient of the monomial $\bm{x}^\sigma$ in $f_0$ appears solely in
the diagonal of $M_{2,2}$ and this diagonal contains only this
coefficient, then the evaluations of
$\frac{f_0(\bm{x})}{\bm{x}^\sigma}$ at the solutions of
$(f_1,\dots,f_n)$, that is
$\{\frac{f_0(\bm{x})}{\bm{x}^\sigma}|_{\bm{x} = \alpha} : (\forall i >
0) f_i(\alpha) = 0, \bm{x}^\sigma|_{\bm{x} = \alpha} \neq 0 \}$, are
eigenvalues of the Schur complement of $M_{2,2}$, that is
$M_{2,2} - M_{2,1} \cdot M_{1,1}^{-1} \cdot M_{1,2}$.

We extend the eigenvector criteria for these mixed bilinear
systems. When $M$ is our \emph{Koszul resultant matrix}, we show how
to recover the coordinates of the solutions from the eigenvectors of
the Schur complement of $M_{2,2}$. This approach works for systems
whose solutions have no multiplicities.  

Algorithm~\ref{alg:solve2bilinear} summarizes our strategy to solve
square 0-dimen\-sional 2-bilinear systems whose solutions have no
multiplicities.
    \begin{algorithm}
      \caption{$\texttt{Solve2Bilinear}((\bar{f}_1,\dots,\bar{f}_{n}))$}
      \begin{algorithmic}[1]
        \label{alg:solve2bilinear}
        \REQUIRE \parbox[t]{180px}{$(\bar{f}_1,\dots,\bar{f}_k)$ is a square 2-bilinear
        system such that $V_\P(\bar{f}_1,\dots,\bar{f}_k)$ is
        finite and has no multiplicities.}

        \STATE{$A \leftarrow $ Random linear change of coordinates
        preserving the structure.}

        \STATE{$(f_1,\dots,f_n) \leftarrow (\bar{f}_1 \circ A, \dots, \bar{f}_n \circ A)$.} \hfill (\Cref{thm:chgOfCoord})
        
        \STATE{$f_0 \leftarrow$ Random trilinear polynomial in
        $S(1,1,1)$.}

        \STATE{
        $ \bigl[\begin{smallmatrix}M_{1,1} & M_{1,2} \\ M_{2,1} &
          M_{2,2} \end{smallmatrix} \bigr]$ $\leftarrow$ 
        $\left\{ \; \text{\parbox{280px}{Matrix
        corresponding to $\delta_1((f_0,\dots,f_n),\bm{m})$, split
        wrt the monomial $\mon{\theta}$. \hfill (\Cref{def:piTheta})}} \right. $ }

      \STATE{$\left\{\left(\frac{f_0}{\mon{\theta}}(\alpha), \bar{v}_\alpha\right)\right\}_{\alpha} \leftarrow$ 
        $\left\{
          \text{
        \parbox{255px}{Set of pairs
          Eigenvalue-Eigenvector of the Schur complement of $M_{2,2}$. \hfill (\Cref{thm:eigenvalues})}
        } 
      \right. $}

        \FORALL{$\left(\frac{f_0}{\mon{\theta}}(\alpha), \bar{v}_\alpha\right) \in \left\{\left(\frac{f_0}{\mon{\theta}}(\alpha), \bar{v}_\alpha\right)\right\}_{\alpha}$}

        \STATE{\parbox{350px}{Extract the coordinates $\alpha_x,\alpha_y$ from $\rho_\alpha(\bm{\widehat{\lambda}_\alpha})$ by recovering it \linebreak from $\bigl[\begin{smallmatrix} M_{1,1}^{-1} \cdot M_{2,1} \\
            I \end{smallmatrix}\bigr] \cdot \bar{v}$. \hfill (\Cref{thm:eigenvectors})}}

      \STATE{\parbox{350px}{Let $\alpha_z \in \Pr^{n_z}$ be the unique
          solution to the linear system given by
          $\{f_1(\alpha_x,\alpha_y,\bm{z}) =
          0,\dots,f_n(\alpha_x,\alpha_y,\bm{z}) =0\}$, over $\K[\bm{z}]$.}}

      \STATE{Recover the solution of the system
        {$(\bar{f}_1,\dots,\bar{f}_n)$}, as $A\big((\alpha_x,\alpha_y,\alpha_z)\big)$.}

        \ENDFOR
         \end{algorithmic}
       \end{algorithm}

\paragraph*{Future work.}
Weyman complex leads to determinantal formulas for
mixed multihomogeneous systems. A possible extension is
to classify all the possible determinantal formulas for mixed
multihomogeneous systems of this
construction, 
similarly to \cite{weyman1994multigraded}.
The structure of the Koszul resultant matrix could lead
to more efficient algorithms to perform linear algebra with these
matrices, and hence to solve faster, theoretically and practically,
square mixed multihomogeneous systems.
Finally, our eigenvector criterion should be extensible to any Koszul
resultant matrix. This approach might be adapted to recover the
coordinates of the solutions with multiplicities.

\paragraph*{Paper organization} In \Cref{sec:preliminaries} we
introduce notation and the resultant of mixed
multihomogeneous systems.  In \Cref{sec:determinantal-form},  we present
the Weyman complex in our setting and we
prove the existence of a Koszul-type formula.
Then, in
\Cref{sec:solving-2-bilinear}, we present algorithms for solving
2-bilinear systems; \Cref{sec:eigenvalues-criteria} extends the
eigenvalue criterion to a general class of matrices and
\Cref{sec:eigenvectors} studies the eigenvectors to recover the
coordinates of the solutions. Finally, in \Cref{sec:complexity}, we
compare the size of our matrix with the experimental size of the
matrices in Gr\"{o}bner basis computation.

\section{Preliminaries}
\label{sec:preliminaries}
  Consider $n_x,n_y,n_z \in \N$ and let $\P := \Pr^{n_x} \times
  \Pr^{n_y} \times \Pr^{n_z}$ be a multiprojective space over an
  algebraic closed field $\K$ of characteristic $0$.
  Consider $\bm{x} := \{x_{0},\dots,x_{n_x}\}$,
  $\bm{y} := \{y_{0},\dots,y_{n_y}\}$,
  $\bm{z} := \{z_{0},\dots,z_{n_z}\}$ and let
  $S_x(d_x) := \K[\bm{x}]_{d_x}$, $S_y(d_y) := \K[\bm{y}]_{d_y}$, and
  $S_z(d_z) := \K[\bm{z}]_{d_z}$ be the spaces of homogeneous
  polynomials in variables $\bm{x}$, $\bm{y}$ and $\bm{z}$ and degrees
  $d_x$, $d_y$ and $d_z$, respectively. Let
  $S(d_x,d_y,d_z) := S_x(d_x) \otimes S_y(d_y) \otimes S_z(d_z)$ be
  the multihomogeneous polynomials in $\bm{x}$, $\bm{y}$, and $\bm{z}$
  of degrees $d_x$, $d_y$, and $d_z$, respectively. We say that the
  polynomials in $S(d_x,d_y,d_z)$ have multidegree
  $\bm{d} := (d_x,d_y,d_z) \in \Nz^3$.
  To avoid the repetition of the various definitions for $x$, $y$, and $z$,
  we consider $t \in \{x,y,z\}$.
  The dual space of $S_t(d_t)$ is $S_t(d_t)^*$.
  For 
  $\sigma_t \in \Nz^{n_t+1}$, we define
  $\bm{t}^{\sigma_t} := \prod_{i=0}^{n_t} t_{i}^{\sigma_{t,i}}$.
  Then $\A(d_t) := \{\sigma_t : \bm{t}^{\sigma_t} \in S_t(d_t)\}$ is
  the set of the exponents of all the monomials of degree $d_t$ in
  $\bm{t}$
  and $\A(\bm{d}) := \A(d_x) \times \A(d_y) \times \A(d_z)$ is  the
  set of all the exponents of the monomials of multidegree
  $\bm{d}$. If
  $\bm{\sigma} = (\sigma_x,\sigma_y,\sigma_z) \in \A(\bm{d})$,  then 
  $\mon{\sigma} :=
  \bm{x}^{\sigma_x}\bm{y}^{\sigma_y}\bm{z}^{\sigma_z}$.
  Let $n := n_x+n_y+n_z$. For multidegrees
  $\bm{d} = (\bm{d}_0,\dots,\bm{d}_n) \in (\Nz^{3})^{n+1}$, we
  consider  square multihomogeneous polynomial system
  \begin{equation}
    \label{eq:def-f-1}
    \bm{f} := (f_1,\dots,f_n) \in  S(\bm{d}_1) \times \dots \times S(\bm{d}_n)  \enspace .
  \end{equation}
  Let $V_\P(\bm{f})$ be the set of solutions of $\bm{f}$ over $\P$.
  The multihomogeneous B\'ezout bound (\MHB) \cite{van1978varieties}
  bounds the number of isolated solutions of $\bm{f}$ over $\P$~\cite{bernshtein1975number,kushnirenko1976newton,khovanskii1978newton}.
  The bound is attained for any \emph{generic} square system
  $\bm f$. 
  It is the mixed volume of the polytopes
  $\A(\bm{d}_1), \dots, \A(\bm{d}_n)$~\cite[Chp. 7]{cox2006using} and
  appears as the coefficient of the monomial
  $\prod_{t \in \{x, y, z \}} X_t^{n_t}$ in
  $\prod_{j = 1}^n \sum_{t \in \{x, y, z \}} \bm{d}_{j,t} X_t$~\cite{morgan1987homotopy}.

  In the sequel we consider overdetermined systems which we
  construct by adding an $f_0 \in S(\bm{d}_0)$ to $\bm{f}$, that is,
  \begin{equation}
    \label{eq:def-f-0}
    \bm{f_0} := (f_0,f_1,\dots,f_n) \in S(\bm{d}_0) \times \dots \times S(\bm{d}_n) \enspace .
  \end{equation}
  Typically, we will consider $\bm d_0 =(1,1,1)$, as we would like $f_0$
  to be as simple as possible while still depending on all the
  variables.

  \subsection{Multihomogeneous sparse resultant}
  \label{sec:sparse-res}
  The multihomogeneous sparse resultant of $\bm{f_0}$ is a polynomial
  in the coefficients of the polynomials in $\bm{f_0}$, which vanishes
  if and only if the system has a solution over $\P$.
  Following \cite{cox2006using}, for fixed
  $\bm{d}_0\dots\bm{d}_n \in \Nz^{3}$, we
  introduce a set of variables
  $\bm{u}_i := \{u_{i,\bm{\sigma}}\}_{\bm{\sigma} \in \A(\bm{d}_i)}$,
  for $0 \leq i \leq n$, and
  $\bm{u} := \{\bm{u}_0, \dots, \bm{u}_n\}$. Given $P \in \K[\bm{u}]$,
  we let $P(\bm{f_0})$ denote the value obtained by replacing each
  variable $u_{i,\bm{\sigma}}$ with the coefficient of the monomial
  $\mon{\sigma}$ in the polynomial $f_i$ of $\bm{f_0}$.
  In this way we obtain polynomials over the coefficients of a
  polynomial system.
  The ``universal'' system
  {$\bm{F_{\bm{d}_0,\dots,\bm{d}_n}}  \! \in
  \K[\bm{u_0}][\bm{x},\bm{y},\bm{z}] \times \dots \times
  \K[\bm{u_n}][\bm{x},\bm{y},\bm{z}]$} is
  \begin{align}
    \bm{F_{\bm{d}_0,\dots,\bm{d}_n}} :=
    \Big(\sum_{\bm{\sigma} \in \A(\bm{d}_0)} u_{0,\bm{\sigma}}
    \mon{\sigma}, \; \dots, \;
    \sum_{\bm{\sigma} \in \A(\bm{d}_n)} u_{n,\bm{\sigma}}
    \mon{\sigma} \Big) .
  \end{align}
Here the variables of $\bm{u}$ parametrize the systems described by
polynomials in $S(\bm{d}_0) \times \dots \times S(\bm{d}_n)$ over
$\K^{\#\A(\bm{d}_0)} \times \dots \times \K^{\#\A(\bm{d}_n)}$.

  Consider the set of all tuples of $n+1$ multihomogeneous polynomials
  together with their common solutions over $\P$,  $\{ (f_0,\dots,f_n,
  \alpha) \in S(\bm{d}_0) \times \cdots \times S(\bm{d}_n) \times \P :
  (\forall 0 \leq i \leq n) \, f_i(\alpha) = 0\}$. The projection of
  this set on $S(\bm{d}_0) \times \cdots \times S(\bm{d}_n)$ is
  the set of overdetermined systems with common solutions in $\P$, $\{
  (f_0,\dots,f_n) \in S(\bm{d}_0) \times \cdots \times S(\bm{d}_n) :
  V_\P(f_0,\dots,f_n) \neq \emptyset \}$. By the Projective Extension
  Theorem \cite[Chp. 8 Sec. 5]{cox1992ideals}, this projection is a
  closed set under the Zariski topology and it forms an irreducible
  hypersurface over the vector space $S(\bm{d}_0) \times \cdots \times
  S(\bm{d}_n)$ \cite[Chp.~8]{gelfand2008discriminants}. More formally,
  there is an irreducible polynomial $\Res_\P(\bm{d}_0,\dots,\bm{d}_n) \in \Z[\bm{u}]$
  such that for all the systems $\bm{f_0} \in S(\bm{d}_0) \times \cdots
  \times S(\bm{d}_n)$, $V_\P(\bm{f_0}) \neq \emptyset$ if and only if
  $\Res_\P(\bm{d}_0,\dots,\bm{d}_n)(\bm{f_0}) = 0$. This polynomial is the sparse
  resultant over $\P$ for multihomogeneous systems of
  multidegrees $(\bm{d}_0,\dots,\bm{d}_n)$.

  The resultant $\Res_\P(\bm{d}_0,\dots,\bm{d}_n)$ is itself a multihomogeneous polynomial,
  homogeneous in  each block of variables $\bm{u}_i$. For each $i$, its
  degree with respect to $\bm{u}_i$ is 
  $\MHB(\bm{d}_0,\dots,\bm{d}_{i-1},\bm{d}_{i+1},\dots,\bm{d_{n}})$.

  \subsection{2-bilinear systems}
  \label{sec:2-bilinear}
  A square \emph{2-bilinear system} of type $(n_x,n_y,n_z\;;\;r, s)$ is a
  bilinear system $\bm{f} := (f_1,\dots,f_n)$ with two different
  supports, namely $f_1,\dots,f_r \in S(1,1,0)$ and
  $f_{r+1},\dots,f_n \in S(1,0,1)$, such that $n = r + s$,
  $n_y \leq r$ and $n_z \leq s$.
  It holds $\MHB(\bm{f}) = {r \choose n_y}{n-r \choose n_z}$.
  
  \begin{example}
    The following (\Cref{eq:sysEx}) is a square 2-bilinear system of type $(1,1,1 \;;\;
    2,1)$ and has two solutions over $\P$, namely $\alpha_1 := (1\!:\!1 \;;\;
    1\!:\!1 \;;\; 1\!:\!1)$ and $\alpha_2 := (1\!:\!3 \;;\; 1\!:\!2
    \;;\; 1\!:\!3)$.
    \begin{align} \label{eq:sysEx}
      \left\{
       \begin{array}{ll}
         f_1 :=     7  x_0  y_0  -  8  x_0  y_1  -   x_1  y_0  +  2  x_1  y_1 \\
         f_2 :=  -  5  x_0  y_0  +  7  x_0  y_1  -   x_1  y_0  -     x_1  y_1 \\
         f_3 :=  -  6  x_0  z_0  +  9  x_0  z_1  -   x_1  z_0  -  2  x_1  z_1
       \end{array}
       \right. .
    \end{align}
  \end{example}

  Consider the trilinear $f_0 \in S(1,1,1)$. We refer to the systems
  $\bm{f_0} := (f_0,f_1,\dots,f_n)$ as overdetermined 2-bilinear
  systems.
  We can also consider $f_0$ in $S(1,1,0)$, $S(1,0,1)$,
  $S(1,0,0)$, $S(0,1,0)$ and $S(0,0,1)$. We work with a trilinear
  $f_0$ because in the other cases it is not always possible to
  separate all the solutions of $V_\P(\bm{f})$.
  
  \begin{example}[Cont.]
    Consider the overdetermined 2-bilinear system $\bm{f_0} :=
    (f_0,f_1,f_2,f_3)$, where
    \begin{align*} 
      f_0 := & \; 3 \, x_{0} y_{0} z_{0} - x_{0} y_{0} z_{1}
      - 4 \, x_{0} y_{1} z_{0} + 2 \, x_{0} y_{1} z_{1} \\ &
      + x_{1} y_{0} z_{0} + 2 \, x_{1} y_{0} z_{1}
      + 2 \, x_{1} y_{1} z_{0} - 2 \, x_{1} y_{1} z_{1} .
    \end{align*} 
  \end{example}

  In the following, we use $\bm{F^{(2)}}$ to denote the ``universal''
  system of overdetermined 2-bilinear systems
  (see~\Cref{sec:sparse-res}). Similarly, we use $\Res_\P^{(2)}$, for
  the resultant of the ``universal'' system $\bm{F^{(2)}}$.

  \begin{lemma}
    Let $\MHB(\bm{f}) = {r \choose n_y}{s \choose n_z}$. The degree of $\Res_\P^{(2)}$ is
    \begin{align} \label{eq:degRes} \mu := (n_x+1) \, \MHB(\bm{f}) \,
      \frac{r\cdot s - n_y\cdot n_z + r + s + 1} {(r-n_y+1)(s-n_z+1)}
      \enspace .
    \end{align}
  \end{lemma}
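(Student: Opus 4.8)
The plan is to compute the degree of $\Res_\P^{(2)}$ with respect to each block of coefficient variables $\bm{u}_i$ and then sum these contributions, using the fact (stated in \Cref{sec:sparse-res}) that the degree of $\Res_\P(\bm{d}_0,\dots,\bm{d}_n)$ in $\bm{u}_i$ equals $\MHB(\bm{d}_0,\dots,\bm{d}_{i-1},\bm{d}_{i+1},\dots,\bm{d}_n)$, i.e.\ the multihomogeneous B\'ezout bound of the system obtained by deleting $f_i$. Since $\Res_\P^{(2)}$ is multihomogeneous in the blocks $\bm{u}_0,\dots,\bm{u}_n$, its total degree $\mu$ is the sum $\sum_{i=0}^n \deg_{\bm{u}_i} \Res_\P^{(2)}$. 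So the statement reduces to evaluating each of these $n+1$ mixed-volume quantities and adding them.

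First I would handle $i=0$: deleting the trilinear $f_0$ leaves the square $2$-bilinear system $\bm f=(f_1,\dots,f_n)$, whose $\MHB$ is $\binom{r}{n_y}\binom{s}{n_z}$ by the formula already recorded for $2$-bilinear systems. Next, for $1\le i\le r$ I would delete one of the $S(1,1,0)$ polynomials, so the remaining system has one trilinear polynomial of multidegree $(1,1,1)$, $r-1$ polynomials of multidegree $(1,1,0)$, and $s$ polynomials of multidegree $(1,0,1)$; its $\MHB$ is the coefficient of $X_x^{n_x}X_y^{n_y}X_z^{n_z}$ in $(X_x+X_y+X_z)(X_x+X_y)^{r-1}(X_x+X_z)^{s}$. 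Likewise, for $r+1\le i\le n$ I would delete one of the $S(1,0,1)$ polynomials, giving the coefficient of $X_x^{n_x}X_y^{n_y}X_z^{n_z}$ in $(X_x+X_y+X_z)(X_x+X_y)^{r}(X_x+X_z)^{s-1}$. Extracting these coefficients is a routine binomial computation: expanding $(X_x+X_y+X_z)$ as $X_x+X_y+X_z$ and picking off the relevant multinomial terms from $(X_x+X_y)^{a}(X_x+X_z)^{b}$ yields sums of products of binomial coefficients that telescope into expressions like $\binom{r-1}{n_y}\binom{s}{n_z}$, $\binom{r-1}{n_y-1}\binom{s}{n_z}$, etc.

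Then I would assemble the total: $\mu=\binom{r}{n_y}\binom{s}{n_z} + r\cdot\big(\text{contribution from each }i\le r\big) + s\cdot\big(\text{contribution from each }i>r\big)$, since within each of the two groups the deleted-system $\MHB$ does not depend on which index is removed. After substituting the binomial expressions, I would factor out $\MHB(\bm f)=\binom{r}{n_y}\binom{s}{n_z}$ using the identities $\binom{r-1}{n_y}=\frac{r-n_y}{r}\binom{r}{n_y}$, $\binom{r-1}{n_y-1}=\frac{n_y}{r}\binom{r}{n_y}$ (and the analogues for $s$), collect everything over the common denominator $(r-n_y+1)(s-n_z+1)$, and verify that the numerator simplifies to $(n_x+1)(r\cdot s - n_y\cdot n_z + r + s + 1)$. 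The main obstacle is purely this last bookkeeping step — keeping the binomial identities and the common-denominator algebra straight so that the numerator collapses exactly to the claimed polynomial; there is no conceptual difficulty, only the risk of an arithmetic slip, so I would double-check the final identity on a small case such as the type $(1,1,1\,;\,2,1)$ example in the text, where $\MHB(\bm f)=2$ and $\mu$ should come out to a concrete integer.
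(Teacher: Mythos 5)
Your approach is correct and is exactly the one the preparatory material in \Cref{sec:sparse-res} sets up: the paper records there that $\deg_{\bm{u}_i}\Res_\P = \MHB(\bm{d}_0,\dots,\widehat{\bm{d}_i},\dots,\bm{d}_n)$, so the total degree is the sum over $i$ of these deleted-system mixed volumes, and since the $r$ deletions of a $(1,1,0)$-form give the same contribution (likewise the $s$ deletions of a $(1,0,1)$-form) the sum has exactly the three-term shape you wrote; the paper gives no explicit proof of the lemma, so there is nothing further to compare. I checked that the binomial bookkeeping does collapse to $(n_x+1)\,\MHB(\bm f)\,\frac{rs-n_yn_z+r+s+1}{(r-n_y+1)(s-n_z+1)}$ (writing $a=r-n_y$, $b=s-n_z$ and using $n_x=a+b$ the numerator identity reduces to $ab(1+U+V)+a^2q+aq+pb^2+pb=ab(U+1+V)+a^2q+aq+b^2p+bp$ with $U=a+b$, $V=p+q=n_y+n_z$, which is immediate), and the sanity check on the type $(1,1,1;2,1)$ example gives $2+3+3+2=10=\mu$, matching the $10\times10$ matrix in \Cref{ex:partMatrix}.
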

  
\section{Determinantal formulas for 2-bilinear systems}
\label{sec:determinantal-form}

  
A complex $K_{\bullet}$ is a sequence of modules $\{K_v\}_{v \in \Z}$
together with homomorphisms $\delta_v : K_v \rightarrow K_{v-1}$, such
that
$(\forall v \in \Z) \; \Im(\delta_v) \subseteq \Ker(\delta_{v-1})$,
i.e., $\delta_v \circ \delta_{v-1} = 0$. We say that the complex is
exact if $(\forall v \in \Z) \; \Im(\delta_v) = \Ker(\delta_{v-1})$. A
complex is bounded when there are two constants $a$ and $b$ such that
for every $v < a$ or $b < v$, it holds $K_v = 0$. If all the $K_v$ are
finite dimensional free-modules, then we can choose a basis of them
and we can represent the maps $\delta_v$ using matrices. 
Under certain
assumptions (see \cite[App.~A]{gelfand2008discriminants}) given a
bounded complex of finite dimensional free-modules we can define its
determinant. It is the quotient of minors of the matrices of
$\delta_v$ and it is not zero if and only if the complex is exact.  If
there are only two non-zero modules of the same dimensions in the
complex (that is all the other modules are the zero module), the
determinant of the complex reduces to the determinant of the (matrix
of the) map between these modules.

  The Weyman Complex
  \cite{weyman_calculating_1994,weyman1994multigraded,weyman2003cohomology}
  of a multihomogeneous system $\bm{f}$ is a bounded complex that is
  exact if and only if the sparse resultant of the system $\bm{f}$
  does not vanish \cite[Thm.~9.1.2]{weyman2003cohomology}. The
  determinant of the complex is a power of the resultant
  \cite[Prop.~9.1.3]{weyman2003cohomology}. When all the multidegrees
  are bigger than zero, the determinant of this complex is a non-zero
  constant multiple of the sparse resultant
  \cite[Thm.~3.4.11]{gelfand2008discriminants}. If the Weyman Complex
  only involves two non-zero modules, the resultant of the
  corresponding system is the determinant of the map between these
  modules, and it has a determinantal formula.

  Let 
  $\bm{f_0} := (f_0,f_1,\dots,f_n)$
  be an overdetermined 2-bilinear system.
%
  Consider $E := \K^{n+1}$ and its canonical basis
  $e_0,\dots,e_n$. Given a set $I \subset \{0,\dots,n\}$, we define
  $\bm{e}_{I} := e_{I_1} \wedge \dots \wedge e_{I_{\# I}}$ as the
  exterior product of the elements $e_{I_1},\dots,e_{I_{\# I}}$. As
  the exterior product is antisymmetric, that is
  $e_i \wedge e_j = - e_j \wedge e_i$, when we write
  $e_{I_1} \wedge \dots \wedge e_{I_{\# I}}$ we assume that
  $(\forall i) \, I_i < I_{i+1}$. Let $\bigwedge\limits_{a,b,c} E$ be the
  vector space over $\K$ generated by
  $\{ \bm{e}_{K \cup I \cup J} : K \subset \{0\}, I \subset
  \{1,\dots,r\}, J \subset \{r+1,\dots,n\}, \#I = a, \#J = b, \#K = c
  \}$.
  
  For a \textit{degree vector} $\bm{m} \in \Z^3$,
  the Weyman complex is $K_\bullet(\bm{f_0},\bm{m})$.
%
%
  Each module of the complex is $K_v(\bm{m}) :=
  \bigoplus_{p=0}^{n+1} K_{v,p}(\bm{m})$, where 
%
%
%
 $$ 
K_{v,p}(\bm{m}) := \bigoplus_{\substack{a+b+c = p \\ 0 \leq a \leq r \\ 0 \leq b \leq s \\ 0 \leq c \leq 1}}
 H^{p-v}_{\P}(\bm{m} - (p,p-b,p-a)) \otimes \bigwedge_{a,b,c} E ,$$
 and $H^{q}_{\P}(\bm{m'})$ is the $q$-th cohomology of $\P$ with
 coefficients in the sheaf $\mathcal O(\bm{m'})$, and the space of global
 sections is $H^0_{\P}(\bm{m'})$~\cite{Hart77}.
 Note that the terms $K_{v,p}(\bm{m})$ do not depend on $\bm{f_0}$~\cite[Prop.~2.1]{weyman1994multigraded}.
 Since $\P$ is a product of projective spaces, by K\"unneth's formula
  %
  \begin{align} \label{eq:kunneth}
    H^{p-v}_\P\left(m'_x,m'_y,m'_z\right) \cong \bigotimes_{t \in \{x,y,z\}} H^{j_t}_{\Pr^{n_t}}(m'_t) ,
  \end{align}
  where $j_x + j_y + j_z = p-v$.
  By Serre's duality~\cite[Ch.III,Thm.~5.1]{Hart77} we have the identifications:
  \begin{proposition}
    \label{th:serre}
    For each $t \in \{x,y,z\}$, $m'_t \in \Z$, it holds \linebreak
    (1)~$H^0_{\Pr^{n_t}}(m'_t) \cong S_t(m'_t)$ if $m'_t\ge 0$, 
    (2)~$H^{n_t}_{\Pr^{n_t}}(m'_t) \cong S_t(-m'_t-1-n_t)^*$ if $m'_t<n_t$, 
    where ``$*$'' denotes the dual space, and
    (3)~$H^q_{\Pr^{n_t}}(m'_t) \cong 0$, of all other values of $q$ and $m_t$.
  \end{proposition}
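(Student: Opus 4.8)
The plan is to reduce the three claims to the classical computation of the cohomology of line bundles on projective space, \cite[Ch.~III, Thm.~5.1]{Hart77}. Fix $t \in \{x,y,z\}$ and, to lighten notation, write $m := m'_t$.

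Claim (1) is part (a) of \cite[Ch.~III, Thm.~5.1]{Hart77}: the canonical map from the coordinate ring $\K[t_0,\dots,t_{n_t}]$ to $\bigoplus_{k\in\Z} H^0_{\Pr^{n_t}}(k)$ is an isomorphism of graded $\K$-vector spaces, so its degree-$m$ component gives $H^0_{\Pr^{n_t}}(m) \cong S_t(m)$ when $m \ge 0$ and $H^0_{\Pr^{n_t}}(m) = 0$ when $m < 0$; this is at once the $q=0$ instance of (1) and of (3).

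For (2) I would use part (d) of the same theorem: the cup-product map $H^0_{\Pr^{n_t}}(k) \times H^{n_t}_{\Pr^{n_t}}(-k-n_t-1) \to H^{n_t}_{\Pr^{n_t}}(-n_t-1) \cong \K$ is a perfect pairing of finite-dimensional $\K$-vector spaces for every $k \in \Z$. Taking $k := -m-n_t-1$ turns this into a perfect pairing of $H^0_{\Pr^{n_t}}(-m-n_t-1)$ with $H^{n_t}_{\Pr^{n_t}}(m)$, whence $H^{n_t}_{\Pr^{n_t}}(m) \cong H^0_{\Pr^{n_t}}(-m-n_t-1)^*$; applying (1) to the twist $-m-1-n_t$ then gives $H^{n_t}_{\Pr^{n_t}}(m) \cong S_t(-m-1-n_t)^*$. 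When $-m-1-n_t<0$, i.e.\ $m\ge -n_t$, this dual vanishes, so $H^{n_t}_{\Pr^{n_t}}(m)$ can be nonzero only for $m \le -n_t-1$, which in particular lies in the stated range $m < n_t$ and is exactly the $q=n_t$ instance of (3).

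Finally, the remaining vanishing — $H^q_{\Pr^{n_t}}(m)=0$ for $0<q<n_t$ and all $m$ (part (b) of \cite[Ch.~III, Thm.~5.1]{Hart77}), and $H^q_{\Pr^{n_t}}(m)=0$ for $q<0$ or $q>n_t$ (dimension) — completes (3). I do not expect a genuine obstacle: the argument is essentially a transcription of \cite[Ch.~III, Thm.~5.1]{Hart77} into the notation of the paper. The only point deserving care is that the isomorphisms in (1) and (2) are the canonical ones — the graded coordinate-ring identification in (1) and the $\K$-linear dual coming from the Serre pairing in (2) — because it is precisely these identifications that are substituted into K\"unneth's formula \Cref{eq:kunneth} to make the modules $K_{v,p}(\bm{m})$ of the Weyman complex explicit in terms of the spaces $S_t(\cdot)$ and their duals.
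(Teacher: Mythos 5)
Your proof is correct and takes the same route the paper intends: the paper states \Cref{th:serre} as an immediate consequence of \cite[Ch.~III, Thm.~5.1]{Hart77} and gives no further argument, while you simply spell out how parts (a), (b), (d) of that theorem yield the three claims. The one small subtlety you rightly flag — that (1) must be read with the convention $S_t(k)=0$ for $k<0$ so that (2) holds vacuously in the range $-n_t\le m<n_t$ — is handled correctly, and your remark that the isomorphisms are the canonical graded-ring and Serre-pairing identifications is exactly what makes the substitution into \Cref{eq:kunneth} legitimate.
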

  As a corollary from \Cref{eq:kunneth}, for each
  $t \in \{x,y,z\}$, $j_t \in \{0, n_t\}$.
  Moreover, we can identify dual complexes.
  \begin{proposition}[{\cite[Thm.~5.1.4]{weyman2003cohomology}}] \label{prop:dualDegreeVectors} Let $\bm{m}$ and
    $\bm{m'}$ be degree vectors such that
    $\bm{m} + \bm{m'} = (n_y+n_z, n_x + n_z - s, n_x+n_y-r)$.  Then,
    {$K_v(\bm{m}) \cong K_{1-v}(\bm{m'})^*$} for all $v\in\ZZ$ and
    {$K_\bullet(\bm{f_0},\bm{m})$} is dual to
    {$K_\bullet(\bm{f_0},\bm{m'})$}.
\end{proposition}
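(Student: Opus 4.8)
The plan is to establish the isomorphism $K_v(\bm{m})\cong K_{1-v}(\bm{m'})^*$ one graded summand at a time, combining K\"unneth's formula (\Cref{eq:kunneth}), Serre duality (\Cref{th:serre}), and the perfect pairing on the exterior algebra of $E$; then to upgrade the collection of termwise isomorphisms to an isomorphism of complexes by checking compatibility with the Weyman differentials. Since the modules $K_{v,p}$ do not depend on $\bm{f_0}$, the termwise part is pure bookkeeping, whereas the differentials do involve $\bm{f_0}$ and carry the real content.

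First I would fix a direct summand of $K_{v,p}(\bm{m})$: it is indexed by a triple $(a,b,c)$ with $a+b+c=p$, $0\le a\le r$, $0\le b\le s$, $0\le c\le 1$, and --- after applying \Cref{eq:kunneth} and the corollary that only $j_t\in\{0,n_t\}$ survive --- by a choice of $(j_x,j_y,j_z)$ with $j_x+j_y+j_z=p-v$. By K\"unneth this summand is $\bigl(\bigotimes_{t}H^{j_t}_{\Pr^{n_t}}(m_t-s_t)\bigr)\otimes\bigwedge_{a,b,c}E$, where $(s_x,s_y,s_z)=(p,\,p-b,\,p-a)$ is the shift vector appearing in $K_{v,p}(\bm{m})$. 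Applying Serre duality factorwise, $H^{j_t}_{\Pr^{n_t}}(\ell)\cong H^{n_t-j_t}_{\Pr^{n_t}}(-\ell-n_t-1)^*$, and reassembling by K\"unneth, the cohomology factor becomes the dual of $H^{n-(p-v)}_{\P}\bigl(-\bm{m}+(s_x,s_y,s_z)-(n_x+1,n_y+1,n_z+1)\bigr)$, with $n=n_x+n_y+n_z$. For the exterior factor I would use that the wedge product $\bigwedge^{p}E\times\bigwedge^{n+1-p}E\to\bigwedge^{n+1}E\cong\K$ is a perfect pairing which restricts, because $\#\{0\}=1$, $\#\{1,\dots,r\}=r$, $\#\{r+1,\dots,n\}=s$, to a perfect pairing $\bigwedge_{a,b,c}E\times\bigwedge_{r-a,\,s-b,\,1-c}E\to\K$; hence $\bigwedge_{a,b,c}E\cong(\bigwedge_{r-a,\,s-b,\,1-c}E)^*$.

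Next I would match this with a summand of $K_{1-v}(\bm{m'})^*$. Put $a'=r-a$, $b'=s-b$, $c'=1-c$, $p'=a'+b'+c'=n+1-p$, $v'=1-v$; then $p'-v'=n-(p-v)$, so the cohomological index coming out of Serre duality is exactly the one in $K_{1-v,p'}(\bm{m'})$, and the shift vector there is $(s'_x,s'_y,s'_z)=(p',\,p'-b',\,p'-a')$. A component-by-component comparison then shows that $\bm{m'}-(s'_x,s'_y,s'_z)$ equals $-\bm{m}+(s_x,s_y,s_z)-(n_x+1,n_y+1,n_z+1)$ for every admissible $(a,b,c,p)$ precisely when
\begin{align*}
  \bm{m}+\bm{m'}=(\,n_y+n_z,\ n_x+n_z-s,\ n_x+n_y-r\,),
\end{align*}
where one uses $n=n_x+n_y+n_z=r+s$. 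Thus each summand of $K_{v,p}(\bm{m})$ is canonically isomorphic to the dual of the matching summand of $K_{1-v,\,n+1-p}(\bm{m'})$; summing over all $(a,b,c)$ and all valid $(j_x,j_y,j_z)$ yields a vector-space isomorphism $\Phi_v\colon K_v(\bm{m})\xrightarrow{\ \sim\ }K_{1-v}(\bm{m'})^*$ for every $v\in\ZZ$.

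The last step, which I expect to be the main obstacle, is to check that the $\Phi_v$ intertwine the differentials, i.e.\ $\Phi_{v-1}\circ\delta_v=\pm(\delta'_{2-v})^*\circ\Phi_v$, where $\delta'_\bullet$ denotes the differential of $K_\bullet(\bm{f_0},\bm{m'})$; this is what turns the termwise isomorphism into a duality of complexes. The Weyman differential has a Koszul part --- contraction against $\bm{f_0}$ on $\bigwedge^\bullet E$ tensored with multiplication by the $f_i$ on sections --- and a cohomological part coming from the connecting maps of the \v{C}ech hypercohomology bicomplex. Under the exterior pairing, contraction by a vector is, up to sign, transpose to exterior multiplication by it, which handles the Koszul part; under the Serre (trace/residue) pairing, \v{C}ech coboundaries dualize to their transposes by naturality of Serre duality, which handles the cohomological part; assembling the two with careful tracking of Koszul signs gives the required intertwining. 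Rather than redo this verification, one may invoke \cite[Thm.~5.1.4]{weyman2003cohomology} directly: our statement is its specialization to $\P=\Pr^{n_x}\times\Pr^{n_y}\times\Pr^{n_z}$, with dualizing sheaf $\mathcal O(-n_x-1,-n_y-1,-n_z-1)$, and to the exterior algebra $\bigwedge_{\bullet,\bullet,\bullet}E$ attached to the splitting $E=\langle e_0\rangle\oplus\langle e_1,\dots,e_r\rangle\oplus\langle e_{r+1},\dots,e_n\rangle$ --- and the computation above is precisely the check that Weyman's dual degree vector is the one stated.
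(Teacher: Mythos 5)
The paper does not give a proof of this proposition---it is stated as a bare citation to Weyman's Thm.~5.1.4---so there is no argument of the paper's to compare against; your proposal supplies content the paper leaves entirely to the reference. Your termwise bookkeeping is correct: with shift $(s_x,s_y,s_z)=(p,\,p-b,\,p-a)$ in $K_{v,p}(\bm m)$ and dual indices $p'=n+1-p$, $(a',b',c')=(r-a,\,s-b,\,1-c)$, $v'=1-v$ giving shift $(p',\,p'-b',\,p'-a')$ in $K_{1-v,p'}(\bm{m'})$, the matching condition $m_t+m'_t=s_t+s'_t-(n_t+1)$ forces exactly $\bm m+\bm{m'}=(n_y+n_z,\,n_x+n_z-s,\,n_x+n_y-r)$ after using $n=r+s=n_x+n_y+n_z$, and the split exterior pairing $\bigwedge_{a,b,c}E\times\bigwedge_{r-a,s-b,1-c}E\to\bigwedge^{n+1}E\cong\K$ is indeed perfect because the partition $\{0\}\sqcup\{1,\dots,r\}\sqcup\{r+1,\dots,n\}$ of the index set is fixed, so $\bigwedge_{a,b,c}E$ and $\bigwedge_{r-a,s-b,1-c}E$ are complementary. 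You also correctly flag that the only step needing genuine work is the compatibility of these termwise isomorphisms with the Weyman differentials, and your fallback to the cited theorem for that step is legitimate---it is what the paper does for the entire statement. What your calculation buys is transparency: it lets a reader verify the dual degree-vector formula (and hence \Cref{obs:similarDegreeVectors}) directly, instead of taking Weyman's general theorem on faith and trusting that its specialization to this two-support bilinear setting was carried out correctly.
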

    
  \subsection{Degree vectors and determinantal formulas}
  \label{sec:degree-vectors}
  
  If $K_1(\bm{m})$, $K_0(\bm{m})$ are the only non-zero modules in the
  Weyman complex $K_\bullet(\bm{f_0},\bm{m})$, then the determinant of
  the complex is the determinant of the map,
  between them, $\delta_1(\bm{f_0},\bm{m})$. In this case, we have a
  determinantal formula for the resultant. In the following, when it
  is clear from the context, we write $\delta_1$ instead of
  $\delta_1(\bm{f_0},\bm{m})$.
  
  \begin{theorem} \label{thm:goodDegreeVectors}
    Let $\bm{f_0}$ be a 2-bilinear overdetermined system of type \linebreak
    $(n_x,n_y,n_z;r,s)$, with $f_0\in S(1,1,1)$.
    The \textit{degree vectors} \linebreak
    (1)~$(n_y-1,-1,n_x+n_y-r+1)$,
    (2)~$(n_z+1,n_x+n_z-s+1,-1)$, \linebreak
    (3)~$(n_z-1,n_x+n_z-s+1,-1)$, 
    (4)~$(n_y+1,-1,n_x+n_y-r+1)$ \linebreak
      lead to determinantal Weyman complexes for
      $\Res_\P^{(2)}(\bm{f_0})$.
  \end{theorem}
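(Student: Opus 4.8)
The plan is to show, for each of the four degree vectors $\bm m$ listed, that the Weyman complex $K_\bullet(\bm{f_0},\bm m)$ has only the two non-zero terms $K_1(\bm m)$ and $K_0(\bm m)$. Once this is done, the general facts recalled above (the complex is exact iff $\Res_\P^{(2)}(\bm{f_0})\neq 0$, and its determinant is a power of the resultant) identify $\det\delta_1$ with a power of $\Res_\P^{(2)}(\bm{f_0})$, and a degree count will pin the exponent to $1$. Only two of the four vectors need to be treated directly: the pairs $(1)$, $(2)$ and $(3)$, $(4)$, viewed as vectors, each sum to $(n_y+n_z,\,n_x+n_z-s,\,n_x+n_y-r)$, so \Cref{prop:dualDegreeVectors} gives $K_v(\bm m')\cong K_{1-v}(\bm m)^\ast$; hence if the complex for one member of a pair is concentrated in degrees $0,1$, so is the complex for the other.

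For a fixed $\bm m=(m_x,m_y,m_z)$ I would run through all triples $(a,b,c)$ with $0\le a\le r$, $0\le b\le s$, $0\le c\le 1$, set $p=a+b+c$, and examine the three one-variable twists $m'_x=m_x-a-b-c$, $m'_y=m_y-a-c$, $m'_z=m_z-b-c$ entering $H^{p-v}_\P(\bm m-(p,p-b,p-a))$. By \Cref{eq:kunneth} together with \Cref{th:serre}, the $(a,b,c)$-summand is non-zero precisely when for each $t\in\{x,y,z\}$ one has $m'_t\ge 0$ (giving $j_t=0$) or $m'_t\le -n_t-1$ (giving $j_t=n_t$), any intermediate value killing it; when it survives it sits in cohomological degree $v=p-(j_x+j_y+j_z)$. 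So the task is to verify that every surviving triple has $v\in\{0,1\}$.

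Carrying this out for $(1)=(n_y-1,-1,n_x+n_y-r+1)$: as $m'_y=-1-a-c\le -1$, survival forces $m'_y\le -n_y-1$, i.e.\ $a+c\ge n_y$ and $j_y=n_y$; as $m'_z=(n_x+n_y-r+1)-b-c\ge -n_z$ always (using $b\le s$, $c\le 1$, $s=n-r$), survival forces $m'_z\ge 0$, i.e.\ $b+c\le n_x+n_y-r+1$ and $j_z=0$; then $p\ge a+c\ge n_y$ makes $m'_x=(n_y-1)-p$ negative, so $j_x=n_x$ and $p\ge n_x+n_y$, giving $v=p-(n_x+n_y)\ge 0$. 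For the other bound, combining $a\le r$ with $b+c\le n_x+n_y-r+1$ yields $p\le n_x+n_y+1$, so $v\le 1$. Thus $K_\bullet(\bm{f_0},\bm m)$ is concentrated in degrees $0,1$. For $(4)=(n_y+1,-1,n_x+n_y-r+1)$ the bookkeeping is the same, except that $m'_x=(n_y+1)-p$ only survives for $p\le n_y+1$ (the regime $j_x=n_x$ being empty since $p\le n_x+n_y+1$), so $j_x=0$ and $v=p-n_y\in\{0,1\}$. By the duality remark this settles $(2)$ and $(3)$ as well, and in every case the complex reduces to $0\to K_1(\bm m)\xrightarrow{\delta_1}K_0(\bm m)\to 0$; here the defining inequalities $n_y\le r$, $n_z\le s$ of a 2-bilinear system guarantee both terms are non-zero.

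Finally, I would wrap up as follows. Generic exactness forces $\dim K_1(\bm m)=\dim K_0(\bm m)$, so $\delta_1$ is a square matrix and $\det\delta_1\in\K[\bm u]$ is a non-zero polynomial cutting out the same hypersurface as $\Res_\P^{(2)}$; by irreducibility of $\Res_\P^{(2)}$ and the fact that the determinant of a Weyman complex is a power of the resultant, $\det\delta_1=c\cdot(\Res_\P^{(2)})^e$ for some $e\ge 1$ and nonzero $c\in\K$. All non-zero entries of $\delta_1$ are linear forms in $\bm u$, so $\det\delta_1$ is homogeneous of degree $\dim K_1(\bm m)$; computing this dimension from the formula for $K_{1,p}(\bm m)$, the dimensions in \Cref{th:serre}, and the binomial identity behind \eqref{eq:degRes}, one finds $\dim K_1(\bm m)=\mu=\deg\Res_\P^{(2)}$, which forces $e=1$. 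I expect the main obstacle to be exactly this last dimension count; being exhaustive in the $(a,b,c)$ case analysis is routine but needs care, while the rest is a mechanical use of K\"unneth and Serre duality.
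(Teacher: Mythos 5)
Your proposal is correct and follows essentially the same approach as the paper: the same Künneth/Serre case analysis to pin $v\in\{0,1\}$ via the survival inequalities on $m'_x,m'_y,m'_z$, followed by the same degree-count argument (linear entries, so $\deg\det\delta_1=\dim K_1=\mu$) to force the exponent to be $1$. The only organizational difference is that you verify vectors $(1)$ and $(4)$ directly and invoke duality for $(2),(3)$, whereas the paper verifies only $(1)$ and uses both the $\bm y\leftrightarrow\bm z$ symmetry and duality (Observation~\ref{obs:similarDegreeVectors}) for the rest.
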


\begin{observation} \label{obs:similarDegreeVectors}
  The four degree vectors of \Cref{thm:goodDegreeVectors} provide a
  single matrix formula.  Vector 1 (resp.~2) is obtained from 3
  (resp.~4) by exchanging the variables $\bm{y}$ and $\bm{z}$. By
  Prop.~\ref{prop:dualDegreeVectors}, we can see that 1,2 and
  3, 4 are dual pairs, yielding the same matrix transposed.
\end{observation}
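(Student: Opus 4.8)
The plan is to verify the two asserted structural facts separately and then combine them. First, the $\bm{y}\leftrightarrow\bm{z}$ symmetry: exchanging the roles of $\bm{y}$ and $\bm{z}$ sends a 2-bilinear system of type $(n_x,n_y,n_z;r,s)$ to one of type $(n_x,n_z,n_y;s,r)$, since the polynomials in $S(1,1,0)$ become polynomials in $S(1,0,1)$ and vice versa, and $f_0\in S(1,1,1)$ is fixed by the swap. Under this relabelling, a degree vector $(m_x,m_y,m_z)$ must be rewritten as $(m_x,m_z,m_y)$. Applying this to vector 3, namely $(n_z-1,\,n_x+n_z-s+1,\,-1)$, and then renaming $(n_x,n_z,n_y,s,r)$ back to $(n_x,n_y,n_z,r,s)$ yields $(n_y-1,\,-1,\,n_x+n_y-r+1)$, which is exactly vector 1; similarly vector 4 maps to vector 2. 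Since the Weyman complex, its modules $K_{v,p}(\bm m)$, the cohomology groups $H^q_\P$, and the map $\delta_1$ are all defined intrinsically in terms of the multiprojective space $\P$ and the supports, this relabelling is a genuine isomorphism of complexes, so the matrix of $\delta_1$ for vector 1 applied to a type-$(n_x,n_y,n_z;r,s)$ system coincides, up to a permutation of rows and columns induced by the variable swap, with the matrix of $\delta_1$ for vector 3 applied to the swapped system. This establishes the first sentence.

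Next, the duality claim. The dual-pairing identity of Prop.~\ref{prop:dualDegreeVectors} requires $\bm m+\bm m'=(n_y+n_z,\;n_x+n_z-s,\;n_x+n_y-r)$. I would simply add vectors 1 and 2 coordinatewise: $(n_y-1,-1,n_x+n_y-r+1)+(n_z+1,n_x+n_z-s+1,-1)=(n_y+n_z,\;n_x+n_z-s,\;n_x+n_y-r)$, which is precisely the required sum, so vectors 1 and 2 form a dual pair; the same arithmetic check on vectors 3 and 4 gives the other dual pair. By Prop.~\ref{prop:dualDegreeVectors}, $K_\bullet(\bm{f_0},\bm m)$ is then dual to $K_\bullet(\bm{f_0},\bm{m'})$ with $K_v(\bm m)\cong K_{1-v}(\bm{m'})^*$. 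For the two-term determinantal complexes of \Cref{thm:goodDegreeVectors}, the only nonzero modules are $K_1$ and $K_0$; dualizing sends $K_1(\bm m)$ to $K_0(\bm{m'})^*$ and $K_0(\bm m)$ to $K_1(\bm{m'})^*$, and the differential $\delta_1(\bm{f_0},\bm m)\colon K_1(\bm m)\to K_0(\bm m)$ is carried to its transpose $\delta_1(\bm{f_0},\bm{m'})^*\colon K_0(\bm{m'})^*\to K_1(\bm{m'})^*$. Hence the matrix attached to vector 2 is the transpose of the matrix attached to vector 1 (after identifying the chosen bases with their duals), and likewise for the pair 3, 4.

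Finally, I would combine the two facts to conclude that all four degree vectors yield, up to transposition and a relabelling of basis elements, a single matrix formula: vectors 1 and 3 give the same matrix via the $\bm{y}\leftrightarrow\bm{z}$ swap, vectors 2 and 4 likewise, and within each of the pairs (1,2) and (3,4) the two matrices are transposes of one another. This is what the observation asserts. The only point requiring care — and the step I expect to be the main obstacle — is bookkeeping the identification of bases: one must check that the monomial/exterior-algebra bases chosen for $K_0$ and $K_1$ are compatible with both the variable-swap relabelling and the Serre-duality pairing of Prop.~\ref{prop:dualDegreeVectors}, so that "the same matrix" and "its transpose" are literal statements about concrete matrices rather than merely about abstract maps. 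This amounts to tracking how the Serre-duality isomorphisms of \Cref{th:serre} and the K\"unneth identification \Cref{eq:kunneth} act on the distinguished monomial bases of $S_t(m'_t)$ and $S_t(-m'_t-1-n_t)^*$; it is routine but needs to be stated explicitly.
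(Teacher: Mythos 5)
Your verification is correct and matches what the paper leaves implicit: the observation is stated without a separate proof, and its content is exactly the coordinatewise arithmetic you carry out (the $\bm{y}\leftrightarrow\bm{z}$ relabelling sending vector 3 to 1 and 4 to 2, and the sums $\bm m+\bm m'=(n_y+n_z,\,n_x+n_z-s,\,n_x+n_y-r)$ for the pairs $(1,2)$ and $(3,4)$) together with the functoriality of the Weyman complex and Prop.~\ref{prop:dualDegreeVectors}. Your closing caveat about fixing compatible monomial bases so that "transpose" is literal is apt but does not affect the correctness of the observation as stated.
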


  \begin{proof}
    We consider only the first degree vector
    $\bm{m} := \linebreak  (n_y-1,-1,n_x+n_y-r+1)$. By
    Obs.\ref{obs:similarDegreeVectors}, the other cases are similar.

    First, we show that the complex has only two non-zero terms.
    Since $K_v(\bm{m}) := \bigoplus_{p=0}^{n+1}
    K_{v,p}(\bm{m})$, and in view of 
%
    \Cref{eq:kunneth}, for each $K_{v,p}(\bm{m})$, we have
    to consider sums $\sum_{t \in \{x,y,z\}} j_t = p - v$. By
    \Cref{th:serre}, if $j_t \not\in \{0,n_t\}$, then $K_{v,p} =
    0$. The remaining cases are summarized in the following
    table and their analysis follows.

    {\scriptsize
      \begin{table}[!htbp]\centering
        \begin{tabular}{ | c | c | c || c | c | c | c | }
          \hline
          $j_x$ & $j_y$ & $j_z$ & \multicolumn{4}{c|}{Case}  \\
          \hline\hline
          $0$   & $\bm{0}$   & $0$   &  \multicolumn{4}{c|}{(1)}  \\ \hline
          $n_x$ & $\bm{0}$   & $0$   &  \multicolumn{4}{c|}{(1)} \\ \hline
        \end{tabular} \hfil
        \begin{tabular}{ | c | c | c || c | c | c | c | }
          \hline
          $j_x$ & $j_y$ & $j_z$ & \multicolumn{4}{c|}{Case}  \\
          \hline\hline
          $0$   & $\bm{0}$   & $n_z$ &  \multicolumn{4}{c|}{(1)} \\ \hline
          $n_x$ & $\bm{0}$   & $n_z$ &  \multicolumn{4}{c|}{(1)}  \\ \hline
        \end{tabular} \hfil
        \begin{tabular}{ | c | c | c || c | c | c | c | }
          \hline
          $j_x$ & $j_y$ & $j_z$ & \multicolumn{4}{c|}{Case}  \\
          \hline\hline
          $0$   & $n_y$ & $\bm{n_z}$ &  \multicolumn{4}{c|}{(2)} \\ \hline
          $n_x$ & $n_y$ & $\bm{n_z}$ &  \multicolumn{4}{c|}{(2)}  \\ \hline
        \end{tabular}\hfil
        \begin{tabular}{ | c | c | c || c | c | c | c | }
          \hline
          $j_x$ & $j_y$ & $j_z$ & \multicolumn{4}{c|}{Case}  \\
          \hline\hline
          $\bm{0}$   & $\bm{n_y}$ & $0$   &  \multicolumn{4}{c|}{(3)}   \\ \hline
          $\bm{n_x}$ & $\bm{n_y}$ & $\bm{0}$   &   \multicolumn{4}{c|}{(4)}  \\ \hline
        \end{tabular}
      \end{table}
    } 

    \noindent
    \textbf{Case 1: $\bm{j_y = 0}$.} 
    The second term in the tensor product of $K_{v,p}$ is
    $H^0_{\Pr^{n_y}}(-1 - a - c) \cong S_y(-1-a-c)$, by
    \Cref{th:serre}. As $a, c \geq 0$,
    $S_y(-1-a-c) = 0$. Hence,
    $K_{v,p} = 0$.

    \noindent
    \textbf{Case 2: $\bm{j_z = n_z}$.}
    The third term in the tensor product of $K_{v,p}$ is
    $H^{n_z}_{\Pr^{n_z}}(n_x + n_y - r + 1 - b - c) \cong
    S_z(-(n_x+n_y+n_z) + r -2 + b + c)^*$, by
    \Cref{th:serre}. As $n_x+n_y+n_z = r + s$,
    $- (n_x+n_y+n_z) + r -2 + b + c = - s - 2 + b + c < 0$ because
    $b \leq s$ and $c \leq 1$. Hence,
    $H^{n_z}_{\Pr^{n_z}}(n_x + n_y - r + 1 - b - c) = 0$ and so
    $K_{v,p} = 0$.
    
    \noindent
    \textbf{Case 3: $\bm{j_x = 0}$, $\bm{j_y = n_y}$.} 
    As $j_y = n_y$, the second term in the tensor product $K_{v,p}$ is
    $H^{n_y}_{\Pr^{n_y}}(-1 - a - c) \cong S_y(a+c-n_y)^*$, by
    \Cref{th:serre}. This module is not zero iff $a+c \geq
    n_y$. Consider the first term in the tensor product,
    $H^{0}_{\Pr^{n_x}}(n_y - 1 - p) \cong S_x(n_y -1 -p)$. If
    $a+c \geq n_y$, as $p = a+b+c$, then $n_y - 1 - p \leq -1 -b <
    0$. Hence, either $H^{n_y}_{\Pr^{n_y}}(-1 - a - c) = 0$ or
    $H^{0}_{\Pr^{n_x}}(n_y - 1 - p) = 0$, and so $K_{v,p} = 0$.

    \noindent
    \textbf{Case 4: $\bm{j_x = 0}$, $\bm{j_y = n_y}$, $\bm{j_z = 0}$.}
    The first term in the tensor product $K_{v,p}$ is
    $H^{n_x}_{\Pr^{n_x}}(n_y - 1 - p) \cong S_x(-n_x - n_y + p) =
    S_x(v)$, as $p - v = j_x + j_y + j_z = n_x + n_y$. Hence,
    $H^{n_x}_{\Pr^{n_x}}(n_y - 1 - p) \neq 0$ iff $v \geq 0$.
    As $j_z = 0$ the third term in the tensor product of $K_{v,p}$ is
    $H^{0}_{\Pr^{n_z}}(n_x + n_y - r + 1 - b - c) \cong S_z(n_x + n_y
    - r + 1 - b - c)$. This term is not zero iff
    $n_x + n_y - r + 1 \geq b + c$.
    Moreover, as $p = a + b + c$, $v = a + b + c - n_x - n_y$. Then,
    if $H^{0}_{\Pr^{n_z}}(n_x + n_y - r + 1 - b - c) \neq 0$, then
    $v \leq a - r + 1$. By definition $a \leq r$, so $v \leq 1$.

    We deduce that all other modules apart from
    $K_{1, n_x+n_y+1}(\bm{m})$ and \linebreak $K_{0, n_x+n_y}(\bm{m})$ are equal
    to zero. Hence, by~\cite[Prop.~9.1.3]{weyman2003cohomology} the
    determinant of (a matrix expressing) $\delta_1$ is a
    power
    \footnote{The exponent is known to be one for any very ample
      supports~\cite{gelfand2008discriminants}, i.e. 
      $(\forall i,j)\; \bm d_{i,j}>0$. However, due to
      the zero degrees, 2-bilinear supports are ample but not very ample.}
    of $\Res_\P^{(2)}(\bm{f_0})$.

    To conclude, it suffices to show that the exponent is
    equal to one.  Due to the form
    $\delta_1 : K_{1,q+1}(\bm{m}) \to K_{0,q}(\bm{m})$, the elements
    in a matrix that represents $\delta_1$ have degree $(q+1)-q=1$
    as polynomials in
    $\K[\bm{u}]$~\cite[Prop.~5.2.4]{weyman2003cohomology}.  Therefore,
    the exponent is one iff the degree of the resultant is
    equal to the dimension of the matrix of

    $$K_\bullet(\bm{f_0}, \bm{m}) : 0 \rightarrow K_{1,
      n_x+n_y+1}(\bm{m}) \xrightarrow{\delta_1} K_{0,
      n_x+n_y}(\bm{m}) \rightarrow 0 \, . $$
    We analyze the possible values for $(a,b,c)$ to compute the
    dimension. Following \textbf{Case 4}, if
    {$H^{0}_{\Pr^{n_z}}(n_x + n_y - r + 1 - b - c) \neq 0$},
    then
    the possible values for $a$ are $v + r - 1 \leq a \leq r$, for
    $v \in \{0,1\}$.
    As $b = p - a - c$, and $0 \leq c \leq 1$, we enumerate all the
    options for $(a,b,c)$ and write our modules as

    \begin{align} \label{eq:K1}
      K_{1} = & \; K_{1, n_x+n_y+1}  \cong \;  L_{1,1} \oplus L_{1,2} \\ = &
      \Big(
      S_x(1)^* {\otimes} 
      S_y(r-n_y)^* {\otimes} 
      S_z(0) 
      \otimes
      \bigwedge_{r,s-n_z+1,0} E
      \Big)
      \oplus \nonumber  \\ & 
      \Big(
      S_x(1)^* {\otimes} 
      S_y(r-n_y+1)^* {\otimes} 
      S_z(0) 
      \otimes
      \bigwedge_{r,s-n_z,1} E
      \Big) . \nonumber
      \\ \label{eq:K0}
      K_{0} = & \; K_{0, n_x+n_y} \cong \; L_{0,1} \oplus L_{0,2}  \oplus L_{0,3}  \oplus L_{0,4}  \\ 
      = & \Big(
      S_x(0)^* {\otimes} 
      S_y(r-n_y-1)^* {\otimes} 
      S_z(0) 
      \otimes
      \bigwedge_{r-1,s-n_z+1,0} E
      \Big)
      \oplus \nonumber \\ &
      \Big(
      S_x(0)^* {\otimes} 
      S_y(r-n_y)^* {\otimes} 
      S_z(1) 
      \otimes
      \bigwedge_{r,s-n_z,0} E
      \Big)
      \oplus \nonumber  \\ &
      \Big(
      S_x(0)^* {\otimes} 
      S_y(r-n_y)^* {\otimes} 
      S_z(0) 
      \otimes
      \bigwedge_{r-1,s-n_z,1} E
      \Big)
      \oplus  \nonumber  \\  &
      \Big(
      S_x(0)^* {\otimes} 
      S_y(r-n_y+1)^* {\otimes} 
      S_z(1) 
      \otimes
      \bigwedge_{r,s-n_z-1,1} E
      \Big) . \nonumber
    \end{align}
  
     To compute their dimensions we
    notice that
    {$\dim\left( \bigwedge_{a,b,c} E \right) = {r \choose a} {s \choose
      b} $ }, and we recall that $\dim S_t(q) = \dim S_t(q)^*= {n_t+q \choose q}$.
    %
    %
%
    The calculation leads to $\dim(K_{1}) = \dim(K_{0}) = \mu$,
    see \Cref{eq:degRes}.
  \end{proof}

  The four degree vectors of \Cref{thm:goodDegreeVectors} are not the
  only ones that lead to determinantal formulas. We are interested in
  them because, experimentally, there are no Sylvester-type formulas
  and only these degree vectors lead to Koszul-type formulas
  \cite{emiris2016bit,mantzaflaris2017resultants}.
  
  \subsection{Construction of the map $\bm{\delta_1(f_0,m)}$}
  \label{sec:determinantalMap}

  Following \cite[Sec.~5.5]{weyman2003cohomology}, we construct the
  map
  $\delta_1(\bm{f_0},\bm{m}) : K_{1}(\bm{m}) \rightarrow
  K_{0}(\bm{m})$. By \Cref{obs:similarDegreeVectors}, we only consider
  $\bm{m} = (n_y-1,-1,n_x+n_y-r+1)$.

  In the proof of \Cref{thm:goodDegreeVectors} we saw that the map
  $\delta_1(\bm{F^{(2)}},\bm{m})$ has linear coefficients in
  $\K[\bm{u}]$.  As it is a linear map between free modules, it is enough
  to define it over a basis of $K_0$ and $K_1$.

  First we introduce some notation.
  Let $t \in \{x,y,z\}$. For each ${\sigma_t} \in \A(d)$, $d\in\N_0$, consider
  $\bm{\partial t}^{\sigma_t} \in S_t(d)^*$ such that 
  $\bm{\partial t}^{\sigma_t}(\sum c_{\theta_t} \bm{t}^{\theta_t}) =
  c_{\sigma_t}$. The set
  $\{\bm{\partial t}^{\sigma_t} : {\sigma_t} \in \A(d)\}$ forms a
  basis of $S_t(d)^*$.
  The map
  $\star_t : \K[\bm{t}] \times \K[\bm{t}]^* \rightarrow \K[\bm{t}]^*$,
   acts as 
  $(\bm{t}^{\theta_t}, \bm{\partial t}^{\sigma_t}) \mapsto
  \bm{t}^{\theta_t} \star_t \bm{\partial t}^{\sigma_t}$, where
    \begin{align} \label{eq:star}
              \bm{t}^{\theta_t} \star_t \bm{\partial t}^{\sigma_t} = \left\{
                \begin{array}{ll}
                  \bm{\partial t}^{{\sigma_t}-{\theta_t}} &
                  \mathrm{if} \; (\forall i, \, 0 \leq i \leq n_t) \;
                  \sigma_{t,i} \geq \theta_{t,i} \\
                  0 & \mathrm{otherwise}
                \end{array}
              \right. .
    \end{align}
    This map is graded, that is, for each $(d,\bar{d}) \in \Z^2$, it
    maps the elements in $S_t(d) \times S_t(\bar{d})^*$ to
    $S_t(\bar{d} - d)^*$. We will denote the map by ``$\star$'' when
    the variable is clear from the context.
    We define the graded map $\psi$,
      \begin{multline} \label{eq:defpsi}
        \psi : \left(\K[\bm{x}]^* \otimes \K[\bm{y}]^* \otimes
        \K[\bm{z}]\right) \times \left( \K[\bm{x}] \otimes \K[\bm{y}]
        \otimes \K[\bm{z}]\right) \\ 
        \rightarrow \left(\K[\bm{x}]^*
        \otimes \K[\bm{y}]^* \otimes \K[\bm{z}]\right)
    \end{multline}
    \begin{multline*}
    \psi(\bm{\partial x}^{\sigma_x} \otimes \bm{\partial y}^{\sigma_y}
    \otimes \bm{z}^{\sigma_z}, \bm{x}^{\theta_x} \otimes
    \bm{y}^{\theta_y} \otimes \bm{z}^{\theta_z}) := \\
    (\bm{x}^{\theta_x} \star
    \bm{\partial x}^{\sigma_x}) \otimes
    (\bm{y}^{\theta_y} \star
    \bm{\partial y}^{\sigma_y}) \otimes
    (\bm{z}^{\theta_z + \sigma_z})
    \end{multline*}
    For each
    $(d_x,d_y,d_z,\bar{d_x},\bar{d_y},\bar{d_z}) \in \Z^6$, it maps
    $\left(S_x(d_x)^* \otimes S_y(d_y)^* \otimes \right.$ $\left.
    S_z(d_z)\right) \times\left( S_x(\bar{d_x})^* \otimes
    S_y(\bar{d_y})^* \otimes S_z(\bar{d_z})\right)$ to
    $S_x(d_x-\bar{d_x})^* \otimes S_y(d_y-\bar{d_y})^* \otimes
    S_z(d_z+\bar{d_z})$.
    
    \vspace{7px}
    
    As $\delta_1(\bm{f_0},\bm{m}) : K_1 \rightarrow K_0$ is linear and
    $K_1 \cong L_{1,1} \oplus L_{1,2}$, we define the map over a
    basis of $L_{1,1}$ and $L_{1,2}$.
    For each
    $\bm \ell \in S_x(1)^* {\otimes} \linebreak S_y(r-n_y)^* {\otimes} S_z(0)$ and
    $\bm{e}_I \in \!\!\!\!\! \bigwedge\limits_{r,s-n_z+1,0}\!\!\!\!\! E$, we consider
    $ \bm \ell \otimes \bm{e}_{I} \in L_{1,1}$ and 
    {
    $$
    \delta_1(\bm{f_0},\bm{m})\left(
      \bm \ell \otimes \bm{e}_{I}
    \right) :=  \!\!\!\!
    \sum_{i = 1}^{n_x+n_y+1} \!\!\!\! 
    (-1)^{i-1} \psi\left(\bm \ell, f_{I_i}\right)
    \otimes \bm{e}_{I \setminus \{I_i\}} \, \in L_{0,1} \oplus L_{0,2} .
    $$}

  For each
  $\bm \ell \in S_x(1)^* {\otimes} S_y(r-n_y+1)^* {\otimes} S_z(0)$ and
  $\bm{e}_J \in \!\!\! \bigwedge\limits_{r,s-n_z,1} \!\!\! E$, we consider
  $ \bm \ell \otimes \bm{e}_{J} \in L_{1,2}$ and
    {
    $$
    \delta_1(\bm{f_0},\bm{m})\left(\bm\ell \otimes
    \bm{e}_J\right) := \!\!\!\!\!
  \sum_{i = 1}^{n_x+n_y+1} \!\!\!\!\!
  (-1)^{i-1} \psi(\bm\ell, f_{J_i})
    \otimes \bm{e}_{J \setminus \{J_i\}} \,\in L_{0,2} \oplus L_{0,3} \oplus L_{0,4} .
    $$
    }
    %

    The map $\delta_1(\bm{f_0},\bm{m})$ corresponds to a Koszul-type
    formula, involving multiplication and dual multiplication maps. The
    matrix that represents this map is a Koszul resultant matrix
    \cite{mantzaflaris2017resultants,buse_matrix_2017}.
    
  \begin{example}[Cont.]
      \label{ex:partMatrix}
      In this case, $\bm{m} = (0,-1,1)$. We consider the following monomial
      basis,

    \begin{center}
    {
    \begin{tabular}{|c|c|}
      \hline
      \multicolumn{2}{|c|}{Basis of $K_1$ (Columns)}  \\ \hline\hline
    (A) & $ {\partial x_0} {\partial y_1^2} {\bm{e}}_{\{0,1,2\}} $ \\ \hline
    (B) & $ \partial x_1 \partial y_0^2 {\bm{e}}_{\{0,1,2\}} $  \\ \hline
    (C) & $ \partial x_1 \partial y_1^2 {\bm{e}}_{\{0,1,2\}} $  \\ \hline
    (D) & $ \partial x_0 \partial y_0 {\bm{e}}_{\{1,2,3\}} $ \\ \hline
    (E) & $ \partial x_0 \partial y_1 {\bm{e}}_{\{1,2,3\}} $  \\ \hline
    (F) & $ \partial x_1 \partial y_0 {\bm{e}}_{\{1,2,3\}} $  \\ \hline
    (G) & $ \partial x_1 \partial y_1 {\bm{e}}_{\{1,2,3\}} $   \\ \hline
    (H) & $ \partial x_0 \partial y_0\partial y_1 {\bm{e}}_{\{0,1,2\}} $  \\ \hline
    (I) & $ \partial x_0 \partial y_0^2 {\bm{e}}_{\{0,1,2\}} $ \\ \hline
    (J) & $ \partial x_1 \partial y_0\partial y_1 {\bm{e}}_{\{0,1,2\}} $  \\ \hline
    \end{tabular}\quad\begin{tabular}{|c|c|}
      \hline
      \multicolumn{2}{|c|}{Basis of $K_0$ (Rows)} \\ \hline\hline
    (I) & $ {\bm{e}}_{\{1,3\}}$ \\ \hline
    (II) & $ {\bm{e}}_{\{2,3\}}$ \\ \hline
    (III) & $ \partial y_0 {\bm{e}}_{\{0,1\}}$  \\ \hline
    (IV) & $ \partial y_1 {\bm{e}}_{\{0,1\}}$  \\ \hline
    (V) & $ \partial y_0 {\bm{e}}_{\{0,2\}}$  \\ \hline
    (VI) & $ \partial y_1 {\bm{e}}_{\{0,2\}}$  \\ \hline
    (VII) & $ \partial y_0 z_1 {\bm{e}}_{\{1,2\}}$  \\ \hline
    (VIII) & $ \partial y_1  z_1 {\bm{e}}_{\{1,2\}}$  \\ \hline 
    (IX) & $ \partial y_0  z_0 {\bm{e}}_{\{1,2\}}$  \\ \hline
    (X) & $ \partial y_1  z_0 {\bm{e}}_{\{1,2\}}$  \\ \hline
    \end{tabular}
    }
    \end{center}
    The following matrix represents
    $\delta_1(\bm{f_0},\bm{m})$ wrt the basis above.
    \begin{center}
      {\small
      \[ 
        \begin{array}{c || cccccccc | cc}
          & 
          (A) & (B) & (C) & (D) & (E) & (F) & (G) & (H) & (I) & (J)
          \\ \hline \hline
          %
         (I)  & 0 & 0 & 0 & \bm{5} & \bm{-7} & \bm{1} & \bm{1} & 0  & 0 & 0\\
         (II)  &  0 & 0 & 0 & \bm{7} & \bm{-8} & \bm{-1} & \bm{2} & 0 & 0 & 0 \\
         (III)  &  0 & \bm{-1} & 0 & 0 & 0 & 0 & 0 & \bm{-1}  & \bm{-5} & \bm{7}\\
         (IV)  &  \bm{7} & 0 & \bm{-1} & 0 & 0 & 0  & 0 & \bm{-1} & 0 & \bm{-5}\\
         (V)  &  0 & \bm{1} & 0 & 0  & 0 & 0 & 0 & \bm{-2} & \bm{-7} & \bm{8}\\
         (VI)  & \bm{8} & 0  & \bm{-2} & 0 & 0 & 0 & 0 & \bm{1} & 0 & \bm{-7}\\
         (VII)  &  0 & \bm{2} & 0 & \bm{9} & 0 & \bm{-2} & 0 & \bm{-2} & \bm{-1} & \bm{2}\\
         (VIII)  &  \bm{2} & 0 & \bm{-2} & 0 & \bm{9} & 0 & \bm{-2} & \bm{2} & 0 & \bm{-1}\\ \hline
         (IX)  &  0 & \bm{1} & 0 & \bm{-6} & 0 & \bm{-1} & 0 & \bm{2} & \bm{3} & \bm{-4} \\
         (X)  &  \bm{-4} & 0 & \bm{2} & 0 & \bm{- 6} & 0 & \bm{-1} & \bm{1} & 0 & \bm{3} 
         \end{array}
       \]
    }
    \end{center}
  The $2\times 2$ splitting illustrated above will be used in the next section.
  \end{example}

\section{Solving 2-bilinear systems}
\label{sec:solving-2-bilinear}

Consider a 0-dimensional system $f_1,\dots,f_n \in \K[\bm{x}]$. A
common strategy for solving is to work over
$\K[\bm{x}]/\langle f_1,\dots,f_n \rangle$, which is a finite a
dimensional vector space over $\K$. We fix a monomial basis,
choose $f_0 \in \K[\bm{x}]$, and compute the matrix that
represents the multiplication by $f_0$ in the quotient ring.
Its eigenvalues are the evaluations of $f_0$ at the
solutions. For a suitable basis, from the eigenvectors we
can recover the coordinates of all the solutions
\cite{elkadi2007introduction,cox2006using,cox2005inDickensteinEmiris}.
To compute these matrices we can use the Sylvester-type formulas
\cite{auzinger1988elimination,emiris1996complexity,cox2006using}.
We extend these techniques to a general family of matrices, that
includes the  Koszul resultant matrix (\Cref{sec:determinantalMap}).

  \subsection{Eigenvalues criteria}
  \label{sec:eigenvalues-criteria}
  In this section we assume fixed  multidegrees
  $\bm{d}_0,\dots,\bm{d}_n$.

  \begin{definition}[property $\Pi_\theta$] \label{def:piTheta}
    Given $\theta \in \A(\bm{d}_0)$ and a matrix \linebreak
    $M := \bigl[\begin{smallmatrix}M_{1,1} & M_{1,2} \\ M_{2,1} &
      M_{2,2} \end{smallmatrix} \bigr] \in \K[\bm{u}]^{\mathcal{K}
      \times \mathcal{K}}$ (\Cref{sec:sparse-res}), we say
    that $M$ has the property $\Pi_\theta(\bm{d}_0,\dots,\bm{d}_n)$,
    or simply $\Pi_\theta$, when:
    \begin{itemize}   \setlength\itemsep{-2px}
    \item $Res_\P(\bm{d}_0,\dots,\bm{d}_n)$ divides $\det(M)$,
    \item the submatrix $M_{2,2}$ is square and its diagonal entries
      equal to $u_{0,\bm{\theta}}$, and
    \item the coefficient $u_{0,\bm{\theta}}$ does not appear anywhere
      in $M$ expect from the diagonal of $M_{2,2}$.
    \end{itemize}
  \end{definition}

  For a system $\bm{f_0}$, Eq.~\eqref{eq:def-f-0}, let $M(\bm{f_0})$
  be the specialization of $M$ at $\bm{f_0}$
  (see~\Cref{sec:sparse-res}). If $M_{1,1}(\bm{f_0})$ is invertible,
  then the Schur complement of $M_{2,2}(\bm{f_0})$ is
  $M_{2,2}(\bm{f_0}) - M_{2,1}(\bm{f_0}) \cdot
  (M_{1,1}(\bm{f_0}))^{-1} \cdot M_{1,2}(\bm{f_0})$.  To simplify, we
  write
  $(M_{2,2} - M_{2,1} \cdot M^{-1}_{1,1} \cdot M_{1,2})(\bm{f_0})$.
  
  \begin{theorem}
    \label{thm:eigenvalues}
    Consider 
    $\theta \in \A(\bm{d}_0)$ and a matrix
    $M \in \K[\bm{u}]^{\mathcal{K} \times \mathcal{K}}$ such that
    $\Pi_\theta$ holds (Def.~\ref{def:piTheta}). Assume a system
    $\bm{f_0}$, Eq.\eqref{eq:def-f-0}, such that the specialization
    $M_{1,1}(\bm{f_0})$ is non-singular. Then, for all
    $\alpha \in V_\P(\bm{f})$ such that $\mon{\theta}(\alpha) \neq 0$,
    $\frac{f_0}{\mon{\theta}}(\alpha)$ is an eigenvalue of the Schur
    complement of $M_{2,2}(\bm{f_0})$.
  \end{theorem}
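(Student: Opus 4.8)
The plan is to regard $M$ as depending \emph{affinely} on the single coefficient variable $u_{0,\bm{\theta}}$, to exploit the divisibility $\Res_\P(\bm{d}_0,\dots,\bm{d}_n)\mid\det M$, and to finish with the classical Schur-complement determinant identity. Fix the system $\bm{f_0}$ of the statement, write $f_0 = g + c\,\mon{\theta}$ where $c$ is the coefficient of $\mon{\theta}$ in $f_0$ and $g\in S(\bm{d}_0)$ has no $\mon{\theta}$-term, and consider the one-parameter family $\bm{f_0}^{(\lambda)} := (g+\lambda\,\mon{\theta},\,f_1,\dots,f_n)$, $\lambda\in\K$, so that $\bm{f_0}^{(c)} = \bm{f_0}$. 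Because $M$ has property $\Pi_\theta$, the variable $u_{0,\bm{\theta}}$ occurs in $M$ only on the diagonal of $M_{2,2}$, where each entry is exactly $u_{0,\bm{\theta}}$, while $M_{1,1},M_{1,2},M_{2,1}$ do not involve it; substituting the coefficients of $\bm{f_0}^{(\lambda)}$ changes only the value fed into $u_{0,\bm{\theta}}$, namely to $\lambda$. Hence
\[
   M(\bm{f_0}^{(\lambda)}) \;=\; N + \lambda D,\qquad N:=M(\bm{f_0}^{(0)}),\quad D:=\begin{bmatrix}0&0\\ 0&I\end{bmatrix},
\]
with $I$ the identity of the size of $M_{2,2}$, and moreover $N_{1,1}=M_{1,1}(\bm{f_0})$, $N_{1,2}=M_{1,2}(\bm{f_0})$, $N_{2,1}=M_{2,1}(\bm{f_0})$ and $M_{2,2}(\bm{f_0})=N_{2,2}+c\,I$.

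Next, take $\alpha\in V_\P(\bm f)$ with $\mon{\theta}(\alpha)\neq 0$ and set $\lambda_\alpha:=-\,g(\alpha)/\mon{\theta}(\alpha)$; this is well defined since $g$ and $\mon{\theta}$ share the multidegree $\bm{d}_0$. Then $(g+\lambda_\alpha\,\mon{\theta})(\alpha)=0$, so $\alpha\in V_\P(\bm{f_0}^{(\lambda_\alpha)})$, whence $\Res_\P(\bm{d}_0,\dots,\bm{d}_n)(\bm{f_0}^{(\lambda_\alpha)})=0$; writing $\det M = \Res_\P(\bm{d}_0,\dots,\bm{d}_n)\cdot Q$ with $Q\in\K[\bm u]$ and specializing gives $\det(N+\lambda_\alpha D)=0$. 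Since $N_{1,1}=M_{1,1}(\bm{f_0})$ is nonsingular by hypothesis, the Schur-complement determinant identity yields, for every $\lambda$,
\[
  \det(N+\lambda D)\;=\;\det(N_{1,1})\cdot\det(T+\lambda I),\qquad T:=N_{2,2}-N_{2,1}N_{1,1}^{-1}N_{1,2},
\]
so $\det(T+\lambda_\alpha I)=0$, i.e. $-\lambda_\alpha$ is an eigenvalue of $T$.

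Finally, from $M_{2,2}(\bm{f_0})=N_{2,2}+cI$ together with the identifications of the off-diagonal blocks, the Schur complement of $M_{2,2}(\bm{f_0})$ equals $T+cI$; hence $-\lambda_\alpha+c$ is one of its eigenvalues, and
\[
  -\lambda_\alpha+c \;=\; \frac{g(\alpha)}{\mon{\theta}(\alpha)}+c \;=\; \frac{g(\alpha)+c\,\mon{\theta}(\alpha)}{\mon{\theta}(\alpha)} \;=\; \frac{f_0(\alpha)}{\mon{\theta}(\alpha)} \;=\; \frac{f_0}{\mon{\theta}}(\alpha),
\]
which is exactly the assertion. I expect the only delicate step to be the first paragraph — verifying that $\Pi_\theta$ forces $M(\bm{f_0}^{(\lambda)})$ to be \emph{exactly} $N+\lambda D$ and that the blocks of $N$ agree with those of $M(\bm{f_0})$ up to the $cI$ shift on the $(2,2)$ block — after which everything reduces to the Schur identity plus bookkeeping. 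Note that the hypothesis $\mon{\theta}(\alpha)\neq0$ is used precisely to make $\lambda_\alpha$, equivalently $\frac{f_0}{\mon{\theta}}(\alpha)$, well defined.
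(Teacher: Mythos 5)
Your proof is correct and follows essentially the same route as the paper: both exploit $\Pi_\theta$ to write the specialization of $M$ as the matrix at $\bm{f_0}$ minus a scalar times $\bigl[\begin{smallmatrix}0&0\\0&I\end{smallmatrix}\bigr]$, and both choose that scalar so that the perturbed system has $\alpha$ as a common zero (your $g+\lambda_\alpha\mon{\theta}$ is literally the paper's $g_0=f_0-\tfrac{f_0}{\mon{\theta}}(\alpha)\,\mon{\theta}$). The only difference is the concluding step: the paper takes a nonzero kernel vector of $M(\bm{g_0})$, left-multiplies by the block-triangular matrix $\bigl[\begin{smallmatrix}I&0\\-M_{2,1}M_{1,1}^{-1}&I\end{smallmatrix}\bigr]$, and reads off the eigenvector equation from the lower block, whereas you invoke the Schur determinant identity $\det(N+\lambda D)=\det(N_{1,1})\det(T+\lambda I)$ and conclude from the vanishing of the determinant. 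Both are valid; the paper's version has the mild advantage of producing the explicit eigenvector $\bar v = [\,0\mid I\,]v$, which is reused verbatim in \Cref{thm:eigenvectors}, while your determinant argument is marginally shorter for the eigenvalue statement alone.
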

  
  \begin{proof}
    The idea of the proof is as follows: For each
    $\alpha \in V_\P(\bm{f})$, Eq.~(\ref{eq:def-f-1}), we consider a
    system $\bm{g_0}$, slightly different from $\bm{f_0}$, with 
    $\alpha$ as a solution. We study the matrices $M(\bm{f_0})$ and
    $M(\bm{g_0})$ and from the kernel of $M(\bm{g_0})$ we construct an
    eigenvector for the Schur complement of $M_{2,2}(\bm{f_0})$
    corresponding to an eigenvalue equal to
    $\frac{f_0}{\mon{\theta}}(\alpha)$.

    Let $\alpha \in V_\P(\bm{f})$ such that
    $\mon{\theta}(\alpha) \neq 0$. Consider the polynomial
    $g_0 := f_0 - \frac{f_0}{\mon{\theta}}(\alpha) \cdot \mon{\theta}$
    and a new system $\bm{g_0} := (g_0,f_1, \dots,f_n)$.
    The coefficients of the polynomials $g_0$ and $f_0$ are the same,
    with exception of the coefficient of the monomial $\mon{\theta}$,
    so the specializations $u_{i,\bm{\sigma}}(\bm{f_0})$ and
    $u_{i,\bm{\sigma}}(\bm{g_0})$ (\Cref{sec:sparse-res}) differ
    if and only if $i = 0$ and $\bm{\sigma} = \bm{\theta}$.
    Hence, as $\Pi_\theta$ holds, $u_{0,\bm{\theta}}$ does not appear
    in $M_{1,1}$, $M_{2,1}$, and $M_{1,2}$, and
    $M_{1,1}(\bm{g_0}) = M_{1,1}(\bm{f_0})$,
    $M_{1,2}(\bm{g_0}) = M_{1,2}(\bm{f_0})$, and
    $M_{2,1}(\bm{g_0}) = M_{2,1}(\bm{f_0})$.
    The specialization of $u_{0,\bm{\theta}}$ is a ring homomorphism,
    so
    $u_{0,\bm{\theta}}(\bm{g_0}) = u_{0,\bm{\theta}}(\bm{f_0}) -
    \frac{f_0}{\mon{\theta}}(\alpha)$. By $\Pi_\theta$,
    $u_{0,\bm{\theta}}$ only appears in the diagonal of
    $M_{2,2}$. Hence,
    $M_{2,2}(\bm{g_0}) = M_{2,2}(\bm{f_0}) -
    \frac{f_0}{\mon{\theta}}(\alpha) \cdot I$, where $I$ is the
    identity matrix.
    Therefore, $$M(\bm{g_0}) = \left[\begin{smallmatrix}M_{1,1} &M_{1,2}
        \\ M_{2,1} & M_{2,2} \end{smallmatrix} \right](\bm{f_0}) -
    \frac{f_0}{\mon{\theta}}(\alpha) \cdot
    \Big[\begin{smallmatrix}0&0\\0&I\end{smallmatrix} \Big].$$

    By construction $g_0(\alpha) = 0$, $\alpha \in V_\P(\bm{f})$, thus
    $\alpha \in V_\P(\bm{g_0})$, and so $\Res_\P(\bm{g_0})$
    vanishes. By property $\Pi_\theta$, $\det(M)$ is a multiple of
    $\Res_\P(\bm{d}_0,\dots,\bm{d}_n)$, hence $M(\bm{g_0})$ is    singular.
    Let $v \in \ker(M(\bm{g_0}))$,~then
    \[
      M(\bm{g_0}) \cdot v = 0 \iff
      \left[\begin{smallmatrix}
        M_{1,1} & M_{1,2} \\
        M_{2,1} & M_{2,2}
      \end{smallmatrix} \right](\bm{f_0}) \cdot v
      = \frac{f_0}{\mon{\theta}}(\alpha) \cdot
      \left[\begin{smallmatrix}
        0 & 0 \\
        0 & I
      \end{smallmatrix} \right] \cdot v \enspace .
      \]
      Multiplying this equality by the non-singular matrix
      related to the Schur complement of $M_{2,2}(\bm{f_0})$,
      $\left[\begin{smallmatrix} I & 0 \\ - M_{2,1} \cdot M_{1,1}^{-1}
          & I \end{smallmatrix}\right](\bm{f_0})$, we obtain
      \[
      \left[\begin{smallmatrix}
        M_{1,1} & M_{1,2} \\
        0 & (M_{2,2} - M_{2,1} \cdot M^{-1}_{1,1} \cdot M_{1,2})
      \end{smallmatrix} \right](\bm{f_0}) \cdot v
      = \frac{f_0}{\mon{\theta}}(\alpha) \cdot
      \left[\begin{smallmatrix} 0 & 0 \\ 0 & I
      \end{smallmatrix} \right] \cdot v \enspace .
      \]
      Consider the lower part of the matrices in the previous
      identity,
      \[
      \left[\begin{array}{c|c} \!0 & \text{ $M_{2,2} - M_{2,1} \cdot
        M^{-1}_{1,1} \cdot M_{1,2} \!$}
      \end{array} \right](\bm{f_0}) \cdot v
      = \frac{f_0}{\mon{\theta}}(\alpha) \cdot \left[\begin{array}{c|c} \! 0 & I \!
      \end{array} \right] \cdot v
      \]
    and let $\bar{v} := \left[\begin{array}{c|c} 0 & I
      \end{array} \right] \cdot v$ be a truncation of the vector $v$. Then,
   \[(M_{2,2} - M_{2,1} \cdot M^{-1}_{1,1} \cdot M_{1,2})(\bm{f_0}) \cdot \bar{v} =
      \frac{f_0}{\mon{\theta}}(\alpha) \cdot \bar{v} \enspace .\]

    This equality proves that $\frac{f_0}{\mon{\theta}}(\alpha)$ is an
    eigenvalue of the Schur complement of $M_{2,2}(\bm{f_0})$ with
    eigenvector $\bar{v}$.
  \end{proof}

  Let $\bm{f} \in S(\bm{d}_1) \times \dots \times S(\bm{d}_n)$,
  Eq.~(\ref{eq:def-f-1}), be a square system. Consider
  $f_0 \in S(\bm{d}_0)$ and $\theta \in \A(\bm{d}_0)$. We say that the
  rational function $\frac{f_0}{\mon{\theta}}$ separates the zeros of
  the system, if for all $\alpha \in V_\P(\bm{f})$,
  $\mon{\theta}(\alpha) \neq 0$ and for all
  $\alpha, \alpha' \in V_\P(f_1,\dots,f_n)$,
  $\frac{f_0}{\mon{\theta}}(\alpha) = \frac{f_0}{\mon{\theta}}(\alpha')
  \iff \alpha = \alpha'$.
 
  \begin{corollary}
    \label{thm:separateRootsImpliesDiagonalizable}
    Under the assumptions of \Cref{thm:eigenvalues}, if the row
    dimension of $M_{2,2}$ is $\MHB(\bm{d}_1,\dots,\bm{d}_n)$,
    $\frac{f_0}{\mon{\theta}}$ separates the zeros of
    $(f_1,\dots,f_n)$ and there are $\MHB(\bm{d}_1,\dots,\bm{d}_n)$
    different solutions for this subsystem (over $\P$), then the Schur
    complement of $M_{2,2}(\bm{f_0})$ is diagonizable with
    eigenvalues $\frac{f_0}{\mon{\theta}}(\alpha)$,
    for $\alpha \in V_\P(f_1, \dots, f_n)$.
  \end{corollary}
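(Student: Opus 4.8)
The plan is to invoke \Cref{thm:eigenvalues} once for each solution of the subsystem $(f_1,\dots,f_n)$ and then apply the elementary linear algebra fact that a square matrix of size $N$ with $N$ pairwise distinct eigenvalues is diagonalizable — eigenvectors attached to distinct eigenvalues are linearly independent, hence form a basis.

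First I would fix notation: write $S := (M_{2,2} - M_{2,1}\cdot M_{1,1}^{-1}\cdot M_{1,2})(\bm{f_0})$ for the Schur complement of $M_{2,2}(\bm{f_0})$. It is well defined because $M_{1,1}(\bm{f_0})$ is non-singular by the hypotheses of \Cref{thm:eigenvalues}, and it is a square matrix of size $N := \MHB(\bm{d}_1,\dots,\bm{d}_n)$: indeed $M_{2,2}$ is square by property $\Pi_\theta$ and its row dimension is $N$ by assumption. Next I would note that, since $\frac{f_0}{\mon{\theta}}$ separates the zeros of $(f_1,\dots,f_n)$, every $\alpha\in V_\P(\bm{f})$ satisfies $\mon{\theta}(\alpha)\neq 0$; hence \Cref{thm:eigenvalues} applies to each such $\alpha$ and produces $\frac{f_0}{\mon{\theta}}(\alpha)$ as an eigenvalue of $S$. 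The separation property also makes $\alpha\mapsto \frac{f_0}{\mon{\theta}}(\alpha)$ injective on $V_\P(f_1,\dots,f_n)$, and by assumption this set has exactly $N$ elements, so $S$ has at least $N$ distinct eigenvalues. Since $S$ has size $N$, it has exactly $N$ distinct eigenvalues, is therefore diagonalizable, and its spectrum is precisely $\{\frac{f_0}{\mon{\theta}}(\alpha) : \alpha\in V_\P(f_1,\dots,f_n)\}$.

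I do not expect any genuinely hard step: the argument is bookkeeping layered on \Cref{thm:eigenvalues}. The only points that need care are (i) confirming that the Schur complement really has size $N$, which is what makes $N$ distinct eigenvalues already exhaust the spectrum and thereby force diagonalizability, and (ii) observing that the single hypothesis ``$\frac{f_0}{\mon{\theta}}$ separates the zeros'' does double duty, supplying both the non-vanishing $\mon{\theta}(\alpha)\neq 0$ required to invoke \Cref{thm:eigenvalues} and the distinctness of the eigenvalues it yields. The exact-count assumptions on the number of solutions and on the row dimension of $M_{2,2}$ are exactly what the statement presupposes, so no further estimates are needed.
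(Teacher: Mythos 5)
Your argument coincides with the paper's proof: both apply \Cref{thm:eigenvalues} to each $\alpha\in V_\P(\bm{f})$, use the separation hypothesis to obtain $\MHB(\bm{d}_1,\dots,\bm{d}_n)$ pairwise distinct eigenvalues, and conclude diagonalizability by matching the count against the size of the Schur complement. You spell out a few bookkeeping points (that separation also ensures $\mon{\theta}(\alpha)\neq 0$, and that $M_{2,2}$ is square of the right size), but the underlying reasoning is identical.
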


  \begin{proof}
    As a consequence of \Cref{thm:eigenvalues}, for each
    $\alpha \in V_\P(\bm{f})$ we have an eigenvalue
    $\frac{f_0}{\mon{\theta}}(\alpha)$ for the Schur complement of
    $M_{2,2}(\bm{f_0})$. As $\frac{f_0}{\mon{\theta}}$ separates these
    zeros, all the eigenvalues are different. Hence, we have as many
    different eigenvalues as the dimension of the matrix, so the
    matrix is diagonalizable.
  \end{proof}

  Note that, as the $\MHB$ bounds the number of isolated solutions
  counting multiplicities, we can not use
  Thm.~\ref{thm:separateRootsImpliesDiagonalizable} when we have a square
  system $\bm{f}$ such that its solutions over $\P$ have
  multiplicities.
  
  \begin{lemma}
    \label{thm:allTheEigenvaluesAreRoots}
    Under the assumptions of \Cref{thm:eigenvalues}, assume that
     \linebreak
    $\Res_\P(\bm{f_0}) \neq 0$ and
    $\det(M) = q \cdot \Res_\P(\bm{d}_0,\dots,\bm{d}_n)$, where $q$ is
    a non-zero constant in $\K$. If $\lambda$ is an eigenvalue of the
    Schur complement of $M_{2,2}(\bm{f_0})$, then there is
    $\alpha \in V_\P(\bm{f})$ such that
    $\lambda = \frac{f_0}{\mon{\theta}}(\alpha)$.
  \end{lemma}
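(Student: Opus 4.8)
The plan is to run the construction from the proof of \Cref{thm:eigenvalues} backwards: from an eigenvalue $\lambda$ of the Schur complement I would produce a perturbed system whose resultant is forced to vanish, and then read off from it a point of $V_\P(\bm{f})$ at which $f_0/\mon{\theta}$ equals $\lambda$.

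Concretely, given an eigenvalue $\lambda$ of $S := (M_{2,2} - M_{2,1}\cdot M_{1,1}^{-1}\cdot M_{1,2})(\bm{f_0})$ (so $\lambda\in\K$, since $\K$ is algebraically closed), I would set $g_0 := f_0 - \lambda\cdot\mon{\theta}$ and $\bm{g_0} := (g_0,f_1,\dots,f_n)$. Exactly as in \Cref{thm:eigenvalues}, property $\Pi_\theta$ puts $u_{0,\bm{\theta}}$ only on the diagonal of $M_{2,2}$, so $M(\bm{g_0}) = M(\bm{f_0}) - \lambda\cdot\bigl[\begin{smallmatrix}0&0\\0&I\end{smallmatrix}\bigr]$; in particular the $(1,1)$-block of $M(\bm{g_0})$ is still the non-singular $M_{1,1}(\bm{f_0})$ and its $(2,2)$-block is the square matrix $M_{2,2}(\bm{f_0}) - \lambda I$. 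Next I would invoke the block-triangularization identity $\det\bigl[\begin{smallmatrix}A&B\\C&D\end{smallmatrix}\bigr] = \det(A)\det(D - CA^{-1}B)$ (valid for $A$ invertible, $D$ square) with $A = M_{1,1}(\bm{f_0})$: the Schur complement of the $(2,2)$-block of $M(\bm{g_0})$ is precisely $S - \lambda I$, so $\det(M(\bm{g_0})) = \det(M_{1,1}(\bm{f_0}))\cdot\det(S - \lambda I) = 0$ because $\lambda$ is an eigenvalue of $S$. Since specialization at $\bm{g_0}$ is a ring homomorphism and $\det(M) = q\cdot\Res_\P(\bm{d}_0,\dots,\bm{d}_n)$ with $q\in\K\setminus\{0\}$, this yields $\Res_\P(\bm{g_0}) = 0$, hence $V_\P(\bm{g_0})\neq\emptyset$ by the defining property of the multihomogeneous sparse resultant (\Cref{sec:sparse-res}).

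Finally, I would pick $\alpha\in V_\P(\bm{g_0})$: then $f_i(\alpha)=0$ for $1\le i\le n$, so $\alpha\in V_\P(\bm{f})$, and $0 = g_0(\alpha) = f_0(\alpha) - \lambda\cdot\mon{\theta}(\alpha)$; it remains to rule out $\mon{\theta}(\alpha)=0$, but in that case $f_0(\alpha)=0$ as well, so $\alpha\in V_\P(\bm{f_0})$ and $\Res_\P(\bm{f_0})=0$, against the hypothesis. Hence $\mon{\theta}(\alpha)\neq 0$ and $\lambda = \frac{f_0}{\mon{\theta}}(\alpha)$, as required. The one step that genuinely needs the hypotheses beyond those of \Cref{thm:eigenvalues}, and the one I expect to be the crux, is the passage $\det(M(\bm{g_0}))=0\Rightarrow\Res_\P(\bm{g_0})=0$: it would fail if $\det(M)$ carried an extraneous factor, which is exactly why we assume $\det(M)=q\cdot\Res_\P$ with $q$ a nonzero constant; ruling out the degenerate case $\mon{\theta}(\alpha)=0$ is what the assumption $\Res_\P(\bm{f_0})\neq 0$ is for. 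Everything else is the same $\Pi_\theta$-bookkeeping as in \Cref{thm:eigenvalues}.
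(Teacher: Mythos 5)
Your proposal is correct and follows essentially the same route as the paper: form $\bm{g_0}=(f_0-\lambda\cdot\mon{\theta},f_1,\dots,f_n)$, use $\Pi_\theta$ to see that only the diagonal of $M_{2,2}$ changes, deduce that $M(\bm{g_0})$ is singular (hence $\Res_\P(\bm{g_0})=0$ since $\det M$ has no extraneous factor), pick $\alpha\in V_\P(\bm{g_0})\subset V_\P(\bm{f})$, and invoke $\Res_\P(\bm{f_0})\neq 0$ to rule out $\mon{\theta}(\alpha)=0$. The only cosmetic difference is that you establish $\det(M(\bm{g_0}))=0$ directly via the Schur determinant identity $\det M(\bm{g_0})=\det(M_{1,1})\det(S-\lambda I)$, whereas the paper constructs an explicit kernel vector from the eigenvector $\bar v$ and ``reverses'' the computation of \Cref{thm:eigenvalues}; both amount to the same thing.
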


  \begin{proof}
    Consider the system
    $\bm{g_0} := ((f_0-\lambda \cdot \mon{\theta}),f_1,\dots,f_n)$. As
    the matrix of the Schur complement in the proof of
    \ref{thm:eigenvalues} is invertible, we extend $\bar{v}$ to
    $v = \bigl[\begin{smallmatrix} M_{1,1}^{-1} \cdot M_{2,1} \\
      I \end{smallmatrix}\bigr](\bm{f_0}) \, \bar{v}$, and reverse the
    argument in this proof to show that $M(\bm{g_0})$ is singular. As
    the determinant of $M$ is a non-zero constant multiple of the
    resultant, we deduce that $Res_\P(\bm{g_0})$ is zero. Let
    $\alpha \in V_\P(\bm{g_0})$, then $\alpha \subset V_\P(\bm{f})$
    and $(f_0-\lambda \cdot \mon{\theta})(\alpha) = 0$, equivalently,
    $f_0(\alpha) = \lambda \cdot \mon{\theta}(\alpha)$. As we assumed
    that $\Res_\P(\bm{f_0}) \neq 0$, then $f_0(\alpha) \neq 0$ and so
    $\frac{f_0}{\mon{\theta}}(\alpha) = \lambda$.
  \end{proof}

  \begin{proposition}
    \label{thm:m11invertible}
    Under the assumptions of \Cref{thm:eigenvalues}, assume
    $\det(M) = q \cdot \Res_\P(\bm{d}_0,\dots,\bm{d}_n)$, where $q$ is
    a non-zero constant in $\K$, and that the (row) dimension of
    $M_{2,2}$ is $\MHB(\bm{d}_1,\dots,\bm{d}_n)$. Then for any system
    $\bm{f_0} := (f_0,\dots,f_n)$,
    $V_\P(\mon{\theta},f_1,\dots,f_n) = \emptyset$ if and only if
    $M_{1,1}(\bm{f_0})$ is non-singular.
  \end{proposition}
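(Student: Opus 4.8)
The plan is to extract, from the polynomial identity $\det(M) = q\cdot\Res_\P(\bm d_0,\dots,\bm d_n)$ in $\K[\bm u]$, the coefficient of the monomial $u_{0,\bm\theta}^{d}$ on both sides, where $d := \MHB(\bm d_1,\dots,\bm d_n)$. This will yield a closed form for $\det(M_{1,1})$ which, it turns out, involves only the variables $\bm u_1,\dots,\bm u_n$ and equals, up to the constant $q$, the specialization of $\Res_\P(\bm d_0,\dots,\bm d_n)$ at $f_0 := \mon{\theta}$; the defining property of the resultant then closes the argument.

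For the left-hand side: by property $\Pi_\theta$ (\Cref{def:piTheta}) the variable $u_{0,\bm\theta}$ occurs in $M$ only on the diagonal of $M_{2,2}$, and there every diagonal entry is exactly $u_{0,\bm\theta}$; hence $\det(M)$ has degree at most $\dim(M_{2,2})$ in $u_{0,\bm\theta}$. Expanding the determinant by multilinearity over the rows of the bottom block and replacing each of those rows by its $u_{0,\bm\theta}$-homogeneous part — the standard basis row supported in the columns of $M_{2,2}$ — one sees that the coefficient of $u_{0,\bm\theta}^{\dim(M_{2,2})}$ in $\det(M)$ equals $\det\!\bigl[\begin{smallmatrix} M_{1,1} & M_{1,2} \\ 0 & I \end{smallmatrix}\bigr] = \det(M_{1,1})$. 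Since we assume $\dim(M_{2,2}) = \MHB(\bm d_1,\dots,\bm d_n) = d$, the coefficient of $u_{0,\bm\theta}^{d}$ in $\det(M)$ is precisely $\det(M_{1,1})$.

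For the right-hand side: the resultant is homogeneous of degree $d$ in the block $\bm u_0$ (\Cref{sec:sparse-res}), so writing $\Res_\P(\bm d_0,\dots,\bm d_n) = \sum_{j=0}^{d} u_{0,\bm\theta}^{j}\,Q_j$, each $Q_j$ is homogeneous of degree $d-j$ in $\bm u_0\setminus\{u_{0,\bm\theta}\}$; in particular $R_\theta := Q_d$ involves only $\bm u_1,\dots,\bm u_n$, and the specialization $f_0 := \mon{\theta}$ (that is, $u_{0,\bm\theta}\mapsto 1$ and all other coordinates of $\bm u_0$ to $0$) annihilates every $Q_j$ with $j<d$ and leaves exactly $R_\theta = \Res_\P(\bm d_0,\dots,\bm d_n)|_{f_0 = \mon{\theta}}$. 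Comparing the coefficients of $u_{0,\bm\theta}^{d}$ on the two sides then gives the polynomial identity $\det(M_{1,1}) = q\, R_\theta$ in $\K[\bm u_1,\dots,\bm u_n]$; note that this forces $\det(M_{1,1})$ to be independent of $f_0$, consistent with the fact that the statement does not mention $f_0$.

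It remains to specialize at $\bm f_0$. Since $\det(M_{1,1})$ is a polynomial in $\bm u_1,\dots,\bm u_n$ alone, $M_{1,1}(\bm f_0)$ is singular if and only if $q\,R_\theta(f_1,\dots,f_n) = 0$, equivalently (as $q\in\K\setminus\{0\}$) $\Res_\P(\mon{\theta}, f_1,\dots,f_n) = 0$, equivalently $V_\P(\mon{\theta}, f_1,\dots,f_n)\neq\emptyset$ — this last step being the defining property of the resultant applied to the genuine system $(\mon{\theta}, f_1,\dots,f_n)\in S(\bm d_0)\times\cdots\times S(\bm d_n)$. Negating yields the asserted equivalence. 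The main obstacle is the bookkeeping around the coefficient of $u_{0,\bm\theta}^{d}$: extracting $\det(M_{1,1})$ cleanly uses both clauses of $\Pi_\theta$ together with the hypothesis $\dim(M_{2,2}) = \MHB(\bm d_1,\dots,\bm d_n)$, so that the exponent $d$ simultaneously matches the size of $M_{2,2}$ and the $\bm u_0$-degree of the resultant; and identifying the corresponding coefficient of $\Res_\P(\bm d_0,\dots,\bm d_n)$ with its specialization at $f_0 = \mon{\theta}$ rests on the multihomogeneity of the resultant.
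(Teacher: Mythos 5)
Your proof is correct and follows essentially the same route as the paper's: identify $\det(M_{1,1})$ with the coefficient of $u_{0,\bm\theta}^{\MHB(\bm d_1,\dots,\bm d_n)}$ in $\det(M)$ (via $\Pi_\theta$ and the hypothesis on $\dim M_{2,2}$), use multihomogeneity of $\Res_\P$ in $\bm u_0$ to identify the matching coefficient on the right-hand side with the specialization at $f_0 = \mon\theta$, and then invoke the defining property of the resultant. Your bookkeeping via the expansion $\Res_\P = \sum_j u_{0,\bm\theta}^j Q_j$ is just a slightly tidier packaging of the paper's $P\cdot u_{0,\bm\theta}^{\MHB} + Q$ decomposition.
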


  \begin{proof}
    Consider the determinant of $M$. As it is a multiple of the
    resultant (\Cref{sec:sparse-res}) and the resultant is a
    multihomogeneous polynomial of degree
    $\MHB(\bm{d}_1,\dots,\bm{d}_n)$ with respect to $\bm{u_0}$, we can
    write
    $\det(M) = P(\bm{u}) \cdot
    u_{0,\bm{\theta}}^{\MHB(\bm{d}_1,\dots,\bm{d}_n)} + Q(\bm{u})$,
    where $P(\bm{u}) \in \K[\bm{u}]$ does not involve the variables in
    $\bm{u}_{0}$ and $Q(\bm{u}) \in \K[\bm{u}]$ is a polynomial such
    that none of its monomials are multiple of
    $u_{0,\bm{\theta}}^{\MHB(\bm{d}_1,\dots,\bm{d}_n)}$.
    As $\Pi_\theta$ holds, $u_{0,\bm{\theta}}$ only appears in the
    diagonal of $M_{2,2}$. Consider the expansion by minors of
    $\det(M)$. If the (row) dimension of $M_{2,2}$ is
    $\MHB(\bm{d}_1,\dots,\bm{d}_n)$, then
    $P(\bm{u}) = \pm \det(M_{1,1})$.
    The polynomial $P(\bm{u})$ is a constant multiple of the cofactor
    of $u_{0,\bm{\theta}}^{\MHB(\bm{d}_1,\dots,\bm{d}_n)}$ in the
    resultant $\Res_\P(\bm{d}_0,\dots,\bm{d}_n)$.

    By construction, $Q(\bm{u})$ is a homogeneous polynomial with
    respect to the variables $\bm{u}_0$ of degree
    ${\MHB(\bm{d}_1,\dots,\bm{d}_n)}$. As
    $u_{0,\bm{\theta}}^{\MHB(\bm{d}_1,\dots,\bm{d}_n)}$ does not
    divide any monomial in $Q(\bm{u})$, each monomial involves a
    variables of $\bm{u}_0$ different to $u_{0,\theta}$.
    Hence, for any system $\bm{f_0}$, we have
    $Q(\mon{\theta},f_1,\dots,f_n) = 0$.
    By construction, the polynomial $P(\bm{u})$ does not involve any
    of the variables of $\bm{u}_{0}$. Therefore
    $\det(M_{1,1})(\bm{f_0}) =
    \det(M_{1,1})(\mon{\theta},f_1,\dots,f_n)$.
    Therefore, for any system $\bm{f_0}$,  \linebreak
    $q \cdot
    \Res_\P(\bm{d}_0,\dots,\bm{d}_n)(\mon{\theta},f_1,\dots,f_n) =
    \det(M)(\mon{\theta},f_1\dots f_n) =$ \linebreak
    $\pm \det(M_{1,1})(\mon{\theta},f_1\dots f_n) =  \pm
    \det(M_{1,1})(\bm{f_0})$.
    The determinant of $M$ is a non-zero constant multiple of the
    resultant, hence  \linebreak $\det(M_{1,1})(\bm{f_0}) \neq 0$ if and only if
    the system $(\mon{\theta},f_1,\dots,f_n)$ has no solutions over
    $\P$, i.e., $V_\P(\mon{\theta},f_1,\dots,f_n) = \emptyset$.
  \end{proof}

  If the square system $\bm{f} = (f_1,\dots,f_n)$ has no solutions
  at infinity in $\P$, that is all the coordinates of the solutions
  are not zero, then the evaluation of the solutions of $\bm{f}$ at
  any monomial in $S(\bm{d}_0)$ is not zero. Hence, for any
  $\mon{\theta} \in S(\bm{d}_0)$,
  $V_\P(\mon{\theta},f_1,\dots,f_n) = \emptyset$. By
  \Cref{thm:m11invertible}, 
  $M_{1,1}(f_0,f_1,\dots,f_n)$ is invertible.
  To avoid solutions at infinity, in the 0-dimensional
  multihomogeneous case, we perform a generic linear change of
  coordinates that preserves the multihomogeneous structure. We state
  the following corollary without proof.

  \begin{corollary}
    \label{thm:multihomogeneousOK}
    Consider a square multihomogeneous system
    $\bm{f} \in S(\bm{d}_1) \times \dots \times S(\bm{d}_n)$
    with finite  $V_\P(\bm{f})$. Choose $\theta \in \A(\bm{d}_0)$
    and let $M$ be a resultant matrix for
    $\Res_\P(d_0,\dots,d_n)$, such that $\Pi_\theta$ holds. Consider
    any $f_0 \in S(\bm{d}_0)$. Then, for a generic linear change of
    coordinates $A$, preserving the multihomogeneous structure, the
    matrix $M_{1,1}(f_0,f_1\circ A, \dots, f_n\circ~A)$~is~invertible.
  \end{corollary}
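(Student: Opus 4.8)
The plan is to deduce this from \Cref{thm:m11invertible}, using the observation recorded just above the statement: if no coordinate of any point of $V_\P(f_1\circ A,\dots,f_n\circ A)$ vanishes, then $\mon{\theta}$ is nonzero at every such point, hence $V_\P(\mon{\theta},f_1\circ A,\dots,f_n\circ A)=\emptyset$, and \Cref{thm:m11invertible} applied to the system $(f_0,f_1\circ A,\dots,f_n\circ A)$ gives that $M_{1,1}(f_0,f_1\circ A,\dots,f_n\circ A)$ is invertible. So the whole content is the genericity claim: for $A$ in a dense open subset of the group $\mathrm{GL}_{n_x+1}(\K)\times\mathrm{GL}_{n_y+1}(\K)\times\mathrm{GL}_{n_z+1}(\K)$ of structure-preserving changes of coordinates, $V_\P(\bm f\circ A)$ meets no coordinate hyperplane of $\P$.

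For this I would write $A=(A_x,A_y,A_z)$ with $A_t\in\mathrm{GL}_{n_t+1}(\K)$ acting on the block $\bm t$. Each $A_t$ being invertible, $A$ is an automorphism of $\P$ and $V_\P(\bm f\circ A)=A^{-1}\bigl(V_\P(\bm f)\bigr)$; in particular this set is finite of cardinality independent of $A$, say $V_\P(\bm f)=\{\beta^{(1)},\dots,\beta^{(N)}\}$, and each block $\beta^{(k)}_t\in\K^{n_t+1}$ is a nonzero vector. The point $A^{-1}\beta^{(k)}$ lies on a coordinate hyperplane exactly when $\bigl(A_t^{-1}\beta^{(k)}_t\bigr)_i=0$ for some block $t$ and index $i$. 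Now, for a fixed nonzero $w\in\K^{n_t+1}$ and index $i$, the map $B\mapsto(B\,w)_i$ is a nonzero linear form in the entries of $B$, so it vanishes on a hyperplane of matrix space; since $\mathrm{GL}_{n_t+1}(\K)$ is Zariski dense in that space it is not contained in this hyperplane, whence $\{B:(B\,w)_i=0\}$ is a proper closed subset of $\mathrm{GL}_{n_t+1}(\K)$, and pulling back along the biregular involution $B\mapsto B^{-1}$ the same holds for $\{A_t:(A_t^{-1}\beta^{(k)}_t)_i=0\}$. The union of these proper closed sets over the finitely many triples $(t,k,i)$ is a proper closed subset of the group of structure-preserving changes of coordinates, and since $\K$ is infinite its complement is the required dense open set of admissible $A$; the degenerate case $V_\P(\bm f)=\emptyset$ is immediate.

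I expect the only delicate point to be not a computation but the bookkeeping in the first paragraph: one has to confirm that the standing hypotheses of \Cref{thm:m11invertible} --- that $\det M$ is a nonzero constant multiple of $\Res_\P(\bm{d}_0,\dots,\bm{d}_n)$ and that $M_{2,2}$ has row-dimension $\MHB(\bm{d}_1,\dots,\bm{d}_n)$ --- are exactly what is meant here by ``$M$ a resultant matrix for $\Res_\P(\bm{d}_0,\dots,\bm{d}_n)$ such that $\Pi_\theta$ holds'', since only then does \Cref{thm:m11invertible} apply verbatim to $(f_0,f_1\circ A,\dots,f_n\circ A)$.
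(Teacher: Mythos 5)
The paper explicitly states this corollary ``without proof,'' so there is no reference argument to compare against; your proof is a valid one that fills that gap, and it takes the natural route. The reduction to \Cref{thm:m11invertible} is correct: once $V_\P(\mon{\theta},f_1\circ A,\dots,f_n\circ A)=\emptyset$, that proposition gives invertibility of $M_{1,1}$. The genericity argument is also sound: writing $A=(A_x,A_y,A_z)\in\prod_t\mathrm{GL}_{n_t+1}(\K)$ and using $V_\P(\bm f\circ A)=A^{-1}(V_\P(\bm f))$, the finitely many conditions $(A_t^{-1}\beta^{(k)}_t)_i=0$ each cut out a proper closed subset of the irreducible group $\prod_t\mathrm{GL}_{n_t+1}(\K)$ (via the biregular involution $B\mapsto B^{-1}$ and the nonvanishing linear form $B\mapsto(Bw)_i$ for $w=\beta^{(k)}_t\neq 0$), so their union is still proper closed, its complement dense open; the empty-variety case is trivial. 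The bookkeeping caveat you raise at the end is exactly the right one: \Cref{thm:m11invertible} needs, beyond $\Pi_\theta$, that $\det M$ is a nonzero constant multiple of $\Res_\P(\bm{d}_0,\dots,\bm{d}_n)$ and that $M_{2,2}$ has row dimension $\MHB(\bm{d}_1,\dots,\bm{d}_n)$; the paper's phrase ``resultant matrix for $\Res_\P$ such that $\Pi_\theta$ holds'' must be read as bundling these in, since $\Pi_\theta$ alone only guarantees divisibility. With that reading, your argument is complete.
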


  We can use \Cref{thm:eigenvalues} to solve the 2-bilinear
  systems.

  \begin{theorem} \label{thm:chgOfCoord} 
    Assume a 2-bilinear system $f_1,\dots,f_n$ of type \linebreak
    $(n_x,n_y,n_z;r,t)$, such that $V_\P(f_1,\dots,f_n)$ is finite.
    Choose $\theta \in \A(\bm{d}_0)$ and consider the $M$ be the
    matrix of $\delta_1(\bm{F^{(2)}},\bm{m})$
    (\Cref{sec:determinantalMap}) for the ``universal'' system
    $\bm{F^{(2)}}$ rearranged with respect to the monomial
    $\mon{\theta}$. Choose $f_0 \in S(1,1,1)$.
   Then, after applying a generic linear change of coordinates
   $A$, preserving the multihomogeneous structure, the eigenvalues of
   the Schur complement of $M_{2,2}(f_0,f_1\circ A, \dots, f_n\circ A)$
   are the evaluations of $\frac{f_0}{\mon{\theta}}$ over
   $V_\P(f_1 \circ A,\dots,f_n \circ A)$.
  \end{theorem}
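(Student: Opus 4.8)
The plan is to reduce the statement to the general machinery of \Cref{sec:eigenvalues-criteria}: I would show that the matrix $M$ of $\delta_1(\bm{F^{(2)}},\bm m)$ for one of the degree vectors of \Cref{thm:goodDegreeVectors}, say $\bm m=(n_y-1,-1,n_x+n_y-r+1)$, can be reordered (this is the ``rearranging with respect to $\mon{\theta}$'') so that it has property $\Pi_\theta$ of \Cref{def:piTheta} with $M_{2,2}$ of row dimension $\MHB(\bm f)$, and then invoke \Cref{thm:eigenvalues}, \Cref{thm:allTheEigenvaluesAreRoots} and \Cref{thm:multihomogeneousOK}. From \Cref{thm:goodDegreeVectors} we already know that $K_\bullet(\bm{F^{(2)}},\bm m)$ has only the two nonzero terms $K_1\cong L_{1,1}\oplus L_{1,2}$ and $K_0\cong L_{0,1}\oplus\cdots\oplus L_{0,4}$ (see \Cref{eq:K1} and \Cref{eq:K0}) of common dimension $\mu$, and that $\det(M)=q\cdot\Res_\P^{(2)}$ for a nonzero constant $q\in\K$; this supplies both the divisibility requirement of $\Pi_\theta$ and the hypothesis $\det(M)=q\,\Res_\P^{(2)}$ needed afterwards.

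To exhibit the block structure I would use the explicit formula for $\delta_1$ from \Cref{sec:determinantalMap}. A coefficient $u_{0,\bm\sigma}$ of $f_0$ occurs in $M$ only inside an entry of the form $\psi(\bm{\ell},f_0)\otimes\bm{e}_{I\setminus\{0\}}$ coming from a column basis vector $\bm{\ell}\otimes\bm{e}_I$ with $0\in I$, and there with sign $(-1)^{1-1}=+1$ since $0=\min I$. By \Cref{eq:K1}, the columns with $0\in I$ are precisely those of $L_{1,2}$. For such a column, the only part of $\psi(\bm{\ell},f_0)$ carrying $u_{0,\bm\theta}$ is $u_{0,\bm\theta}\cdot\psi(\bm{\ell},\mon{\theta})$, and $\psi(\bm{\ell},\mon{\theta})$ is either $0$ or a single basis vector of $L_{0,2}$ (it is nonzero exactly when the $\bm x$-exponent of $\bm{\ell}$ equals $\theta_x$ and its $\bm y$-exponent dominates $\theta_y$ coordinatewise); moreover $\bm{\ell}$ and $I$ are recovered from $\psi(\bm{\ell},\mon{\theta})$ and $I\setminus\{0\}$. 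Hence $\bm{\ell}\otimes\bm{e}_I\mapsto\psi(\bm{\ell},\mon{\theta})\otimes\bm{e}_{I\setminus\{0\}}$ is an injection from these columns into the basis of $L_{0,2}$. Reordering rows and columns so that these columns and their images come last, I take $M_{2,2}$ to be the resulting bottom-right block: it is square, its diagonal entries are exactly $u_{0,\bm\theta}$, and $u_{0,\bm\theta}$ appears nowhere else in $M$, because it can only occur in such a column and only in the row attached to it. Thus $M$ has property $\Pi_\theta$.

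A direct count finishes the structural part: the $\bm y$-exponents of degree $r-n_y+1$ dominating $\theta_y$ are in bijection with the monomials of degree $r-n_y$ in $n_y+1$ variables, giving $\binom{r}{n_y}$ of them; the exterior part $\bm{e}_J\in\bigwedge_{r,s-n_z,1}E$ forces $\{1,\dots,r\}\subseteq J$ and leaves $\binom{s}{n_z}$ choices; and the $\bm x$-exponent is forced. Hence $\dim M_{2,2}=\binom{r}{n_y}\binom{s}{n_z}=\MHB(\bm f)$. Now the assembly: set $\bm{f_0}:=(f_0,f_1\circ A,\dots,f_n\circ A)$. Since $V_\P(\bm f)$ is finite, \Cref{thm:multihomogeneousOK} gives that for a generic structure-preserving change of coordinates $A$ the block $M_{1,1}(\bm{f_0})$ is invertible; for generic $A$ the solutions of $(f_1\circ A,\dots,f_n\circ A)$ have no vanishing coordinate, so $\mon{\theta}(\alpha)\neq0$ for every $\alpha\in V_\P(f_1\circ A,\dots,f_n\circ A)$, and for generic $f_0$ also $\Res_\P^{(2)}(\bm{f_0})\neq0$. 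Then \Cref{thm:eigenvalues} shows that each $\frac{f_0}{\mon{\theta}}(\alpha)$, for $\alpha\in V_\P(f_1\circ A,\dots,f_n\circ A)$, is an eigenvalue of the Schur complement of $M_{2,2}(\bm{f_0})$, while \Cref{thm:allTheEigenvaluesAreRoots} — whose hypotheses are now all in force — gives the reverse inclusion; the two sets therefore coincide.

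The main obstacle is the second paragraph: locating inside the explicit Koszul map $\delta_1$ the exact block that carries $u_{0,\bm\theta}$, checking that the induced correspondence between its rows and columns is a bijection (so that $M_{2,2}$ is square with constant diagonal $u_{0,\bm\theta}$), and verifying that no further occurrence of $u_{0,\bm\theta}$ leaks into $M_{1,1}$, $M_{1,2}$ or $M_{2,1}$. Everything else is a dimension count or a direct invocation of the results of \Cref{sec:eigenvalues-criteria}.
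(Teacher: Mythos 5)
Your proof is correct and follows essentially the same strategy as the paper's: establish that after reordering, the Koszul matrix $M$ satisfies property $\Pi_\theta$ with $\dim M_{2,2} = \MHB(\bm f)$, then invoke \Cref{thm:eigenvalues}, \Cref{thm:multihomogeneousOK}, and \Cref{thm:allTheEigenvaluesAreRoots}. Where the paper asserts $\Pi_\theta$ from a terse structural observation (each $u_{i,\sigma}$ appears at most once per row and per column, so $u_{0,\theta}$ can be permuted onto a diagonal) and deduces $\dim M_{2,2}$ from the $\bm u_0$-degree of the resultant, you obtain both by explicitly locating $u_{0,\theta}$ in $\delta_1$ — columns $\bm\ell\otimes\bm e_I$ with $0\in I$ mapping to rows $\psi(\bm\ell,\mon\theta)\otimes\bm e_{I\setminus\{0\}}$, with sign $+1$ — and by a direct count $\binom{r}{n_y}\binom{s}{n_z}$; you also make explicit the appeal to \Cref{thm:allTheEigenvaluesAreRoots} for the reverse inclusion, which the paper leaves implicit.
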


  \begin{proof}
    We only need to check if the Koszul resultant matrix
    has the property $\Pi_\theta$.
    The entries of our matrix are the variables of $\bm{u}$
    up to sign.
    Note that if $u_{i,\sigma} \in \bm{u}$ appears in an entry,
    then it does not appear in the other entries in the same row, or
    column. 
    Hence, we can rearrange the matrix in such a way that the
    coefficient $u_{0,\theta}$ only appears in the diagonal of $M_{2,2}$.
    As the determinant of the system is a constant multiple of the
    resultant, the dimension of $M_{2,2}$ the degree of $\bm{u_{0}}$
    in the determinant, which equals the \MHB.
  \end{proof}

  \begin{example}[Cont.]
    In the previous example (\Cref{ex:partMatrix}), we choose
    $\bm{\theta} = ((1,0),(1,0),(1,0)) \in \A(1,1,1)$ and partition
    the matrix as $\bigl[\begin{smallmatrix}M_{1,1} & M_{1,2}
        \\ M_{2,1} & M_{2,2} \end{smallmatrix} \bigr]$. If we consider
    the Schur complement, we get $\bigl[\begin{smallmatrix} 5 & -2
        \\ 4 & -1 \end{smallmatrix} \bigr]$. The characteristic
    polynomial of this matrix is $X^2-4X+3$, whose roots are
    $\frac{f_0}{\mon{\theta}}(\alpha_1) = 3$ and
    $\frac{f_0}{\mon{\theta}}(\alpha_2) = 1$.
  \end{example}
   
  \subsection{Eigenvectors for 2-bilinear systems} \label{sec:eigenvectors}

  We fix $\theta \in \A(\bm{d}_0)$.
  We consider the
  degree vector $\bm{m} = (n_y-1,-1,n_x+n_y-r+1)$ and the
  determinantal formula $M$ for the map
  $\delta_1(\bm{F^{(2)}},\bm{m})$ (\Cref{sec:determinantalMap}). We
  study the right eigenvectors of the Schur complement of $M_{2,2}$ to
  recover the coordinates of all the solutions of a 2-bilinear system
  $\bm{f}$ of type $(n_x,n_y,n_z ; r,s)$ (\Cref{sec:2-bilinear}). We
  assume that the number of different solutions is
  $\#V_\P(\bm{f}) = \MHB(\bm{f})$.

  We augment $\bm{f}$ to $\bm{f_0}$ by adding a trilinear polynomial
  $f_0$, which we specify in the sequel. We study the right
  eigenvalues of the Schur complement of $M_{2,2}(\bm{f_0})$. We
  reduce the analysis of the kernel of $\delta_1(\bm{f_0},\bm{m})$ to
  the analysis of a map in a strand of the Koszul complex of a system
  with common solutions.

  Let $\alpha = (\alpha_x,\alpha_y,\alpha_z) \in \P$, and without loss
  of generality assume that $\alpha_{t,0} \neq 0$, for
  $t \in \{x,y,z\}$.
  First, we study the kernel of $\delta_1(\bm{f_0},\bm{m})$, when the
  overdetermined system $\bm{f_0}$ \emph{has} common solutions. We relate
  this kernel to the eigenvectors, as we did in the proof of
  \cref{thm:eigenvalues}.
  For each variable $t \in \{x,y,z\}$, consider the dual form
  $$\ev^t_\alpha(d_t) := \sum_{\theta_t \in \A(d_t)}
  \frac{\bm{t}^{\theta_t}}{t_0^{d_t}}(\alpha_t) \; \bm{\partial t}^\theta
  \in S_t(d_t)^*
  $$
  for $d_t \geq 0$. If $d_t < 0$, then we take $\ev^t_\alpha(d_t) := 0$.
  
  \begin{observation}
    For each variable $t \in \{x,y,z\}$, given a polynomial
    $g_t \in S_t(\bar{d_t})$, such that $\bar{d_t} \leq d_t$, then
    operator $\star_t$, \Cref{eq:star}, acts over $g_t$ and
    $\ev^t_\alpha(d_t)$ as the evaluation of
    $\tfrac{g_t}{t_0^{\bar{d_t}}}$ at $\alpha$, that is 
    $$g_t \star_t \ev^t_\alpha(d_t) = \frac{g_t}{t_0^{\bar{d_t}}}(\alpha_t)
    \cdot \ev^t_\alpha(d_t - \bar{d_t})  .$$
  \end{observation}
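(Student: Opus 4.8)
The plan is to unwind the definitions: expand $g_t$ and $\ev^t_\alpha(d_t)$ in the monomial (and dual monomial) bases, apply the rule \eqref{eq:star} for $\star_t$, and reindex. Throughout I assume $0 \le \bar d_t \le d_t$; the remaining cases are trivial, since if $\bar d_t < 0$ then $g_t = 0$, and $\bar d_t > d_t$ is excluded by hypothesis (so that $\ev^t_\alpha(d_t - \bar d_t) \ne 0$ need not be worried about). As a sanity check, note that $\star_t$ is graded, sending $S_t(\bar d_t) \times S_t(d_t)^*$ into $S_t(d_t - \bar d_t)^*$, which is exactly the space containing the claimed right-hand side.

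First I would write $g_t = \sum_{\theta_t \in \A(\bar d_t)} c_{\theta_t}\, \bm{t}^{\theta_t}$ and recall $\ev^t_\alpha(d_t) = \sum_{\sigma_t \in \A(d_t)} \tfrac{\bm{t}^{\sigma_t}}{t_0^{d_t}}(\alpha_t)\, \bm{\partial t}^{\sigma_t}$. Since $\star_t$ is defined on pairs of (dual) monomials and extended bilinearly in both arguments,
\[
g_t \star_t \ev^t_\alpha(d_t) \;=\; \sum_{\theta_t \in \A(\bar d_t)} \sum_{\sigma_t \in \A(d_t)} c_{\theta_t}\, \tfrac{\bm{t}^{\sigma_t}}{t_0^{d_t}}(\alpha_t)\; \bigl(\bm{t}^{\theta_t} \star_t \bm{\partial t}^{\sigma_t}\bigr).
\]
Next I would apply \eqref{eq:star}: the inner term is $\bm{\partial t}^{\sigma_t - \theta_t}$ when $\sigma_t \ge \theta_t$ componentwise and $0$ otherwise. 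The one bookkeeping point is the substitution $\tau_t := \sigma_t - \theta_t$: for fixed $\theta_t$, as $\sigma_t$ ranges over the exponents in $\A(d_t)$ with $\sigma_t \ge \theta_t$, the vector $\tau_t$ ranges bijectively over all of $\A(d_t - \bar d_t)$ (using $|\sigma_t| = d_t$, $|\theta_t| = \bar d_t$, hence $|\tau_t| = d_t - \bar d_t \ge 0$; when $\bar d_t = d_t$ this set is the single exponent $0$, and the sum collapses to the constant functional). This gives
\[
g_t \star_t \ev^t_\alpha(d_t) \;=\; \sum_{\theta_t \in \A(\bar d_t)} \sum_{\tau_t \in \A(d_t - \bar d_t)} c_{\theta_t}\, \tfrac{\bm{t}^{\tau_t + \theta_t}}{t_0^{d_t}}(\alpha_t)\; \bm{\partial t}^{\tau_t}.
\]

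Finally I would use that evaluation at $\alpha_t$ is multiplicative, together with $t_0^{d_t} = t_0^{\bar d_t}\, t_0^{d_t - \bar d_t}$ and $\bm{t}^{\tau_t + \theta_t} = \bm{t}^{\tau_t}\bm{t}^{\theta_t}$, to factor $\tfrac{\bm{t}^{\tau_t + \theta_t}}{t_0^{d_t}}(\alpha_t) = \tfrac{\bm{t}^{\theta_t}}{t_0^{\bar d_t}}(\alpha_t)\cdot \tfrac{\bm{t}^{\tau_t}}{t_0^{d_t - \bar d_t}}(\alpha_t)$. Pulling the $\theta_t$-summation inside identifies the scalar $\sum_{\theta_t} c_{\theta_t}\, \tfrac{\bm{t}^{\theta_t}}{t_0^{\bar d_t}}(\alpha_t) = \tfrac{g_t}{t_0^{\bar d_t}}(\alpha_t)$, while the remaining $\tau_t$-sum $\sum_{\tau_t \in \A(d_t - \bar d_t)} \tfrac{\bm{t}^{\tau_t}}{t_0^{d_t - \bar d_t}}(\alpha_t)\, \bm{\partial t}^{\tau_t}$ is by definition $\ev^t_\alpha(d_t - \bar d_t)$, yielding the claimed identity. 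There is no genuinely hard step here — the statement is an unpacking of definitions — and the only thing requiring care is checking that the reindexing $\sigma_t \mapsto \sigma_t - \theta_t$ is a bijection onto $\A(d_t - \bar d_t)$ that exactly captures the nonzero contributions of $\star_t$, plus noting the degenerate ranges.
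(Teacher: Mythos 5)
Your proof is correct, and since the paper states this as an \emph{Observation} and gives no proof at all (leaving it to the reader as routine), there is nothing in the paper to contrast against; what you wrote is the expected direct verification. You expand $g_t$ and $\ev^t_\alpha(d_t)$ in monomial and dual-monomial bases, use bilinearity of $\star_t$, apply the defining rule of $\star_t$, reindex by $\tau_t := \sigma_t - \theta_t$ (a bijection onto $\A(d_t-\bar d_t)$ when restricted to the nonzero contributions), and factor the evaluation using $\bm{t}^{\tau_t+\theta_t} = \bm{t}^{\tau_t}\bm{t}^{\theta_t}$ and $t_0^{d_t} = t_0^{\bar d_t} t_0^{d_t-\bar d_t}$; this cleanly separates the scalar $\frac{g_t}{t_0^{\bar d_t}}(\alpha_t)$ from the dual form $\ev^t_\alpha(d_t-\bar d_t)$. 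One small point worth making explicit (you use it implicitly in the factoring step): the division by powers of $t_0$ is well defined because the paper fixes the standing normalization $\alpha_{t,0}\neq 0$ for all $t\in\{x,y,z\}$ just before introducing $\ev^t_\alpha$.
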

  
  To simplify notation, given $f \in S(d_x,d_y,d_z)$ and
  $(\alpha_x,\alpha_y,\alpha_z) \in~\P$, we denote by
  $f(\alpha_x,\alpha_y) \in S_z(d_z)$ the partial evaluation of
  $\tfrac{f}{x_0^{d_x}y_0^{d_y}}$ at $\bm{x} = \alpha_x$ and
  $\bm{y} = \alpha_y$. This evaluation is well-defined because the
  numerator and denominator share the same degrees w.r.t.
  $\bm{x}$ and $\bm{y}$.

  \begin{lemma} \label{thm:psiAndEv}
    Consider $\bm{d} = (d_x,d_y,d_z)$, 
    $\bm{\bar{d}} = (\bar{d_x},\bar{d_y},\bar{d_z})$. Let
    $f \in S(\bm{\bar{d}})$ and $g_z \in S_z(d_z)$. If
    $d_x \geq \bar{d_x}$ and $d_y \geq \bar{d_y}$, then the map $\psi$
    (\Cref{eq:defpsi}) acts over
    $\ev^x_{\alpha}(d_x) \otimes \ev^y_\alpha(d_y) \otimes g_z$ and
    $f$, as the multiplication of $g_z$ and $f(\alpha_x,\alpha_y)$,
    that is
    $$\psi(
        \ev^x_{\alpha}(d_x) \, \otimes \, \ev^y_\alpha(d_y) \otimes  g_z, f
      ) = 
      \ev^x_\alpha(d_x - \bar{d_x})  \otimes \ev^y_\alpha(d_y - \bar{d_y}) \otimes
      \big( g_z \cdot f(\alpha_x,\alpha_y) \big).
      $$
    \end{lemma}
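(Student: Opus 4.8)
The plan is to unwind the definition of $\psi$ on pure tensors and apply the earlier Observation about $\star_t$ acting as evaluation. First I would write $f = \sum_{\bm\theta} c_{\bm\theta}\, \bm{x}^{\theta_x}\bm{y}^{\theta_y}\bm{z}^{\theta_z}$ as a sum of monomials of multidegree $\bm{\bar d}$, and correspondingly $g_z = \sum_{\eta_z} b_{\eta_z}\bm{z}^{\eta_z} \in S_z(d_z)$. Since $\psi$ is bilinear, it suffices to evaluate it on $\ev^x_\alpha(d_x)\otimes\ev^y_\alpha(d_y)\otimes \bm{z}^{\eta_z}$ against a single monomial $\bm{x}^{\theta_x}\otimes\bm{y}^{\theta_y}\otimes\bm{z}^{\theta_z}$ and then recombine; but in fact it is cleaner to keep $g_z$ intact in the $\bm z$-slot (the $\psi$ map only \emph{multiplies} in the $\bm z$-slot, so there is no obstruction to leaving it as a polynomial) and expand only in $\bm x,\bm y$.

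Then I would compute the $x$-component and $y$-component separately. For the $x$-slot, $\psi$ contributes $\bm{x}^{\theta_x}\star_x \ev^x_\alpha(d_x)$; by the Observation (applied with $g_x = \bm{x}^{\theta_x}$, $\bar d_x = $ its degree, which is $\bar d_x$ since $f\in S(\bm{\bar d})$), this equals $\tfrac{\bm{x}^{\theta_x}}{x_0^{\bar d_x}}(\alpha_x)\cdot \ev^x_\alpha(d_x-\bar d_x)$ — note this uses $d_x\ge \bar d_x$, which is exactly the hypothesis, so that $\ev^x_\alpha(d_x-\bar d_x)$ is genuinely the nonzero dual form and not the zero form. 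Symmetrically for the $y$-slot. For the $\bm z$-slot, $\psi$ sends $(\bm{z}^{\theta_z}, g_z) \mapsto \bm{z}^{\theta_z}\cdot g_z$, i.e.\ ordinary multiplication, landing in $S_z(d_z+\bar d_z)$, consistent with the stated target grading $S_z(d_z + \bar d_z)$.

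Now I would recombine: summing over the monomials of $f$,
\[
\psi\bigl(\ev^x_\alpha(d_x)\otimes\ev^y_\alpha(d_y)\otimes g_z,\ f\bigr)
= \sum_{\bm\theta} c_{\bm\theta}\,
\tfrac{\bm{x}^{\theta_x}}{x_0^{\bar d_x}}(\alpha_x)\,
\tfrac{\bm{y}^{\theta_y}}{y_0^{\bar d_y}}(\alpha_y)\,
\ev^x_\alpha(d_x-\bar d_x)\otimes\ev^y_\alpha(d_y-\bar d_y)\otimes\bigl(\bm{z}^{\theta_z} g_z\bigr).
\]
The scalar factors $\ev^x_\alpha(d_x-\bar d_x)$ and $\ev^y_\alpha(d_y-\bar d_y)$ do not depend on $\bm\theta$, so I can pull them out of the sum, and what remains in the $\bm z$-slot is $g_z\cdot\sum_{\bm\theta} c_{\bm\theta}\,\tfrac{\bm{x}^{\theta_x}}{x_0^{\bar d_x}}(\alpha_x)\,\tfrac{\bm{y}^{\theta_y}}{y_0^{\bar d_y}}(\alpha_y)\,\bm{z}^{\theta_z}$, which is precisely $g_z\cdot f(\alpha_x,\alpha_y)$ by the definition of the partial-evaluation notation $f(\alpha_x,\alpha_y)\in S_z(d_z)$ introduced just before the lemma. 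This gives the claimed identity.

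The only genuinely delicate point — and the one I would state carefully rather than gloss over — is that the Observation must be applied with $d_x$ replaced by the \emph{current} degree of the form at that stage, and one must check the degree bookkeeping in all three slots matches the stated target ($S_x(d_x-\bar d_x)^*\otimes S_y(d_y-\bar d_y)^*\otimes S_z(d_z+\bar d_z)$), together with the inequalities $d_x\ge\bar d_x$, $d_y\ge\bar d_y$ guaranteeing that the intermediate $\star_x,\star_y$ outputs are not killed. Everything else is bilinearity plus bookkeeping; there is no real obstacle beyond keeping the three tensor factors straight.
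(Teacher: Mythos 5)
Your proof is correct and follows essentially the same route as the paper's: expand $f$ into monomials, keep $g_z$ intact in the $\bm z$-slot, apply the Observation about $\star_t$ to the $\bm x$- and $\bm y$-slots, then pull the evaluation scalars into the $\bm z$-factor and recombine to get $g_z\cdot f(\alpha_x,\alpha_y)$. The degree-bookkeeping caveats you flag (that $d_x\ge\bar d_x$, $d_y\ge\bar d_y$ ensure the $\ev$ forms are not zero) match the hypotheses the paper relies on implicitly.
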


  Let $\omega^{(1)} := \{I : \bm{e}_I \in \!\!\!\! \bigwedge\limits_{r,s-n_z+1,0} \!\!\!\! E\}$
  and $\omega^{(2)} := \{J : \bm{e}_J \in \!\!\!\! \bigwedge\limits_{r,s-n_z,1} \!\!\!\! E\}$.
  Let $\rho_\alpha : \K^{\#\omega^{(1)}} \times \K^{\#\omega^{(2)}} \rightarrow
  L_{1,1} \oplus L_{1,2}$, \Cref{eq:K1},
  \begin{multline*} 
  \rho_\alpha(\bm{\lambda^{(1)}},\bm{\lambda^{(2)}}) :=
  \sum_{I \in \omega^{(1)}} \lambda^{(1)}_{I} \cdot \Big( \ev^x_{\alpha}(1) \otimes \ev^y_{\alpha}(r-n_y)
  \otimes 1 \otimes \bm{e}_I \Big) \\
  + \sum_{J \in \omega^{(2)}}
  \lambda^{(2)}_{J} \cdot
  \Big(
  \ev^x_{\alpha}(1) \otimes \ev^y_{\alpha}(r-n_y+1) \otimes
  1 \otimes \bm{e}_J
  \Big)
  \end{multline*}
  As $\#\omega^{(1)} + \#\omega^{(2)} = {s+1 \choose s-n_z+1}$, we
  write $\rho_\alpha : \K^{{s+1 \choose s-n_z+1}} \rightarrow K_{1}$.

  \begin{lemma} \label{thm:rhoIsKoszul} The linear map
    $\delta_1(\bm{f_0},\bm{m}) \circ \rho_\alpha : \K^{s+1 \choose s -
      n_z + 1} \rightarrow K_0 $ is equivalent to the
    $(s - n_z + 1)$-th map of the Koszul complex of the following
    system, consisting of $s+1$ linear polynomials in
    $\bm{z}$,
      \begin{align} \label{eq:fz} 
      \bm{f_z} :=
      \Big(f_0(\alpha_x,\alpha_y),f_{r+1}(\alpha_x,\alpha_y),\dots,f_{n}(\alpha_x,\alpha_y)\Big),
      \end{align}
      restricted to its 0-graded part, i.e. the strand of the Koszul
      complex such that its $(s - n_z + 1)$-th module is isomorphic to
      $\K^{s + 1 \choose s - n_z + 1}$.
    \end{lemma}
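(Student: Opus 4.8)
The plan is to evaluate the composite $\delta_1(\bm{f_0},\bm{m})\circ\rho_\alpha$ directly on the distinguished basis of $\K^{\binom{s+1}{s-n_z+1}}$, to simplify each occurrence of the map $\psi$ (\Cref{eq:defpsi}) by means of \Cref{thm:psiAndEv}, and then to recognise what is left as the $(s-n_z+1)$-th Koszul differential of $\bm{f_z}$ in \Cref{eq:fz}, restricted to its degree-$0$ strand.

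First I would fix the combinatorics of the indices. Both $\omega^{(1)}$ and $\omega^{(2)}$ carry the fixed block $\{1,\dots,r\}$ (and, for $\omega^{(2)}$, in addition the index $0$): writing $I=\{1,\dots,r\}\sqcup(T+r)$ for $T\subseteq\{1,\dots,s\}$ with $\#T=s-n_z+1$, and $J=\{0,1,\dots,r\}\sqcup\bigl((T\setminus\{0\})+r\bigr)$ for $T\subseteq\{0,\dots,s\}$ with $0\in T$ and $\#T=s-n_z+1$, one gets a bijection between $\omega^{(1)}\sqcup\omega^{(2)}$ and the size-$(s-n_z+1)$ subsets of $\{0,1,\dots,s\}$ (those avoiding $0$ come from $\omega^{(1)}$, those containing $0$ from $\omega^{(2)}$). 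Under this bijection the domain $\K^{\binom{s+1}{s-n_z+1}}$ of $\rho_\alpha$ is identified with the degree-$0$ piece of $\bigwedge^{s-n_z+1}\K^{s+1}$, the index $0$ marking the slot of $f_0(\alpha_x,\alpha_y)$ and each $k\in\{1,\dots,s\}$ the slot of $f_{r+k}(\alpha_x,\alpha_y)$ in $\bm{f_z}$.

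Next I would expand $\delta_1$ on the basis of $L_{1,1}\oplus L_{1,2}$ evaluated at $\rho_\alpha(\bm{e}_T)$, so that each summand takes the shape $(-1)^{i-1}\,\psi\bigl(\ev^x_\alpha(1)\otimes\ev^y_\alpha(d_y)\otimes 1,\,f_{I_i}\bigr)\otimes\bm{e}_{I\setminus\{I_i\}}$ with $d_y\in\{r-n_y,\,r-n_y+1\}$, and apply \Cref{thm:psiAndEv} to turn the $\psi$-factor into $\ev^x_\alpha(0)\otimes\ev^y_\alpha(d_y-\bar{d}_y)\otimes f_{I_i}(\alpha_x,\alpha_y)$. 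The summands indexed by $I_i\in\{1,\dots,r\}$ vanish, because their $\psi$-factor carries the scalar $f_{I_i}(\alpha_x,\alpha_y)=0$ (recall $f_1(\alpha)=\dots=f_r(\alpha)=0$; and when $r=n_y$ the summand vanishes already for degree reasons, its would-be target factor $S_y(r-n_y-1)^{*}$ being $0$), so the image of $\delta_1\circ\rho_\alpha$ avoids $L_{0,1}$ and $L_{0,3}$ entirely. The summand $I_i=0$, present only when $0\in T$, contributes the $f_0(\alpha_x,\alpha_y)$-part into $L_{0,2}$, and the summands $I_i=r+k$ contribute the $f_{r+k}(\alpha_x,\alpha_y)$-parts into $L_{0,2}$ when $0\notin T$ and into $L_{0,4}$ when $0\in T$. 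Factoring out the fixed nonzero functionals $\ev^x_\alpha(0)$, $\ev^y_\alpha(r-n_y)$, $\ev^y_\alpha(r-n_y+1)$, which depend on $\alpha$ but not on $T$, what is left of $\delta_1(\rho_\alpha(\bm{e}_T))$ is precisely $\sum_{t\in T}\pm\,(\bm{f_z})_t\cdot\bm{e}_{T\setminus\{t\}}$, i.e. the Koszul differential; the two $\ev^y$-functionals supply the isomorphism of its target $\bigwedge^{s-n_z}\K^{s+1}\otimes S_z(1)$ onto the subspace of $L_{0,2}\oplus L_{0,4}$ spanned by $\ev^x_\alpha(0)\otimes\ev^y_\alpha(r-n_y)\otimes S_z(1)\otimes\bigwedge_{r,s-n_z,0}E$ and $\ev^x_\alpha(0)\otimes\ev^y_\alpha(r-n_y+1)\otimes S_z(1)\otimes\bigwedge_{r,s-n_z-1,1}E$, the matching dimensions being those already computed in the proof of \Cref{thm:goodDegreeVectors}.

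I expect the genuinely delicate point to be the sign bookkeeping. The signs $(-1)^{i-1}$ produced by $\delta_1$ are governed by the position of the removed index inside the ordered set $I$ (resp. $J$), which includes the padding block $\{1,\dots,r\}$ (resp. $\{0,1,\dots,r\}$), whereas the Koszul sign of the $t$-summand is the position of $t$ inside $T\subseteq\{0,\dots,s\}$. Comparing the two, one finds the discrepancy is a sign depending only on the parity of $r$, on whether $T$ contains $0$, and on whether the removed slot is $0$ or not, but not on $T$ itself nor on which $t\in T$ is removed within a fixed case; such a factor is absorbed by rescaling finitely many basis vectors of the source and of the target by $\pm1$, after which $\delta_1\circ\rho_\alpha$ coincides with the Koszul differential on the nose. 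Tracking this carefully, and in particular the fact that the $0\in T$ and $0\notin T$ cases must be treated separately, is where the argument is most error-prone; the rest is a direct substitution into \Cref{thm:psiAndEv}.
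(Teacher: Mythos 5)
Your proposal follows essentially the same route as the paper's proof: split the source according to whether the index set contains $0$ (the paper's $\rho_\alpha^{(1)}$, $\rho_\alpha^{(2)}$), apply \Cref{thm:psiAndEv} to collapse each $\psi$-factor to an evaluation, kill the summands with $I_i\in\{1,\dots,r\}$ because $f_{I_i}(\alpha_x,\alpha_y)=0$, factor out the $\ev^x_\alpha$, $\ev^y_\alpha$ tensor factors, and recognise the remainder in $L_{0,2}\oplus L_{0,4}$ as the degree-$0$ strand of the $(s-n_z+1)$-th Koszul differential of $\bm{f_z}$. The only cosmetic difference is that the paper tracks the signs explicitly through the polynomials $P_1$, $P_2$ and a global factor $(-1)^r$ rather than invoking a rescaling of basis vectors, but the bookkeeping you describe lands in the same place.
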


    \noindent
    If $\bm{f_0}$ has a solution
    $(\alpha_x,\alpha_y,\alpha_z) \in V_\P(\bm{f_0})$, then,
    $\alpha_z$ is a solution of the linear system $\bm{f_z}$,
    that is $\alpha_z \in V_\P(\bm{f_z})$. As $\bm{f_z}$
    is an overdetermined system, the Koszul complex $\bm{f_z}$ is not
    exact \cite[Thm.~XXI.4.6]{lang2002algebra}.
    %

    \begin{lemma} \label{thm:lambdaInKernel}
      Let $\bm{f_0}$ be an overdetermined 2-bilinear
      system. If $\alpha \in V_\P(\bm{f_0})$, then there is a non-zero
      $\bm{{\widehat{\lambda}_\alpha}} \in \K^{s+1 \choose s - n_z + 1}$ such
      that $\delta_1(\bm{f_0},\bm{m}) \circ
      \rho_\alpha(\bm{\widehat{\lambda}_\alpha}) = 0$.
    \end{lemma}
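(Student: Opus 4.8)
The plan is to use \Cref{thm:rhoIsKoszul} to transport the statement into a purely linear‑algebraic assertion about the Koszul complex of the overdetermined linear system $\bm{f_z}$ of \Cref{eq:fz}, and then to build an explicit nonzero kernel vector out of its syzygies.

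First I would invoke \Cref{thm:rhoIsKoszul}: writing $h_0 := f_0(\alpha_x,\alpha_y)$ and $h_i := f_{r+i}(\alpha_x,\alpha_y)$ for the $s+1$ components of $\bm{f_z}$, the composition $\delta_1(\bm{f_0},\bm{m})\circ\rho_\alpha$ is, after identifying $\K^{s+1 \choose s - n_z + 1}$ with $\bigwedge^{s-n_z+1}\K^{s+1}$, the $(s-n_z+1)$-th Koszul differential $d$ of $\bm{f_z}$ on its $0$-graded strand; here $\K^{s+1}$ carries a basis $\bar e_0,\dots,\bar e_s$ indexed by $h_0,\dots,h_s$, and $d$ is the antiderivation of $\bigwedge^\bullet\K^{s+1}$ extending $\bar e_i \mapsto h_i$. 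Since this identification carries the kernel of $\delta_1(\bm{f_0},\bm{m})\circ\rho_\alpha$ onto the kernel of $d$, it suffices to exhibit a nonzero $\omega \in \bigwedge^{s-n_z+1}\K^{s+1}$ with $d(\omega)=0$; the sought $\bm{\widehat{\lambda}_\alpha}$ is then its preimage under the identification, and it is nonzero because the identification is an isomorphism.

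Next I would exploit the hypothesis $\alpha = (\alpha_x,\alpha_y,\alpha_z) \in V_\P(\bm{f_0})$, which forces $\alpha_z$ to be a common projective zero of $h_0,\dots,h_s$. Hence every $h_i$ lies in the $n_z$-dimensional subspace of $S_z(1)$ of forms vanishing at $\alpha_z$, so their span $U \subseteq S_z(1)$ satisfies $\dim U \le n_z$. Therefore the syzygy space $Z := \ker\bigl(\K^{s+1} \to S_z(1),\ \mu \mapsto \sum_{i=0}^{s} \mu_i h_i\bigr)$ has dimension $s+1-\dim U \ge s+1-n_z = s-n_z+1$.

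Finally I would pick linearly independent $\mu^{(1)},\dots,\mu^{(s-n_z+1)} \in Z$ and set $\omega := \mu^{(1)}\wedge\cdots\wedge\mu^{(s-n_z+1)}$, which is nonzero by linear independence; since $d$ is an antiderivation with $d(\mu) = \sum_i \mu_i h_i$ for $\mu \in \K^{s+1}$, one gets $d(\omega) = \sum_{k=1}^{s-n_z+1} (-1)^{k-1}\bigl(\sum_i \mu^{(k)}_i h_i\bigr)\, \mu^{(1)}\wedge\cdots\widehat{\mu^{(k)}}\cdots\wedge\mu^{(s-n_z+1)} = 0$ because each $\mu^{(k)}\in Z$. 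Transporting $\omega$ back through the identification of \Cref{thm:rhoIsKoszul} produces the required nonzero $\bm{\widehat{\lambda}_\alpha}$ with $\delta_1(\bm{f_0},\bm{m})\circ\rho_\alpha(\bm{\widehat{\lambda}_\alpha})=0$. I expect the only delicate point to be the bookkeeping behind \Cref{thm:rhoIsKoszul} — matching the index families $\omega^{(1)},\omega^{(2)}$ with the $(s-n_z+1)$-subsets of $\{0,r+1,\dots,n\}$ and tracking signs — whereas the heart of the argument (a wedge of $s-n_z+1$ syzygies is automatically Koszul‑closed, and there are at least that many syzygies exactly because $\bm{f_z}$ has a common zero) is immediate.
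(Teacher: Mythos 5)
Your proof is correct and takes essentially the same route as the paper's: reduce via \Cref{thm:rhoIsKoszul} to the $0$-graded strand of the $(s-n_z+1)$-th Koszul differential of $\bm{f_z}$, observe that the common zero $\alpha_z$ forces the $s+1$ linear forms to span at most $n_z$ dimensions so the syzygy space has dimension at least $s-n_z+1$, and kill the differential by wedging $s-n_z+1$ independent syzygies. The only cosmetic difference is that you wedge the syzygies directly and invoke the antiderivation (Leibniz) property of the Koszul differential, whereas the paper first changes basis (via \cite[Lem.~XXI.4.2]{lang2002algebra}) so that those syzygies become the last $s-n_z+1$ coordinate vectors and the corresponding linear forms become zero — your version is a touch more explicit and self-contained, but mathematically identical.
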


    \begin{proof}
      Following \Cref{thm:rhoIsKoszul}, if we compose
      $\delta_1(\bm{f},\bm{m})$ and $\rho_\alpha$, then we obtain a
      map which is similar to the 0-graded part of the
      $(s - n_z + 1)$-th map of the Koszul complex of the $s+1$ linear
      polynomials in $\bm{z}$, $\bm{f_z}$, \Cref{eq:fz}.
      As the linear system $\bm{f_z}$ has a solution $\alpha_z$, at
      most $n_z$ of its polynomials are linearly independent. Hence, the
      Koszul complex of $\bm{f_z}$ is isomorphic to a Koszul complex
      {$K(\widetilde{f}_1, \dots, \widetilde{f}_{n_z}, 0, \dots, 0)$} of
      a system of $s+1$ linear polynomials, where $(s+1-n_z)$ of them
      are equal to zero \cite[Lem.~XXI.4.2]{lang2002algebra}. The
      $(s+1-n_z)$-th map of
      {$K(\widetilde{f}_1, \dots, \widetilde{f}_{n_z}, 0, \dots, 0)$}
      maps $e_{n_z+1}\!\wedge\!\dots\!\wedge\!{e_{s+1-n_z}}$ to zero.
      Hence, its 0-graded part has a non-trivial kernel, and so there
      is a non-zero
      $\bm{\widehat{\lambda}_\alpha} \in \K^{s \choose s - n_z + 1}$ such that
      $\delta_1(\bm{f},\bm{m}) \circ \rho_\alpha(\bm{\widehat{\lambda}_\alpha})
      = 0$.
    \end{proof}
    
    \begin{theorem} \label{thm:eigenvectors}
      Let  $\bm{f} = (f_1,\dots,f_n)$ be a  square 2-bilinear system
      of type $(n_x,n_y,n_z;r,s)$, such that it has
      ${r \choose n_y} \cdot {s \choose n_z}$ different solutions over
      $\P$. Consider $\theta \in \A(1,1,1)$ such that
    $$\Res_\P^{(2)}(\mon{\theta},f_1,\dots,f_n) \neq 0$$ and $f_0 \in
    S(1,1,1)$ such that $\frac{f_0}{\mon{\theta}}$ separates the
    elements in $V_\P(\bm{f})$. Let $\bm{m} := (n_y-1,-1,n_x+n_y-r+1)$
    and $M \in \K[\bm{u}]^{\mathcal{K} \times \mathcal{K}}$ related to
    $\delta_1(\bm{F^{(2)}}, \bm{m})$ for the overdetermined 2-bilinear
    ``universal'' system (\Cref{thm:goodDegreeVectors}). Then, the
    Schur complement of $M_{2,2}(\bm{f_0})$ is diagonalizable, each
    eigenvalue is $\frac{f_0}{\mon{\theta}}(\alpha)$, for
    $\alpha \in V_\P(f_1,\dots,f_n)$, and we can extend the
    eigenvector $\bar{v}_\alpha$ related to $\alpha$ to
    $v_\alpha :=
    \bigl[\begin{smallmatrix} M_{1,1}^{-1} \cdot M_{2,1} \\
        I \end{smallmatrix}\bigr](\bm{f_0}) \cdot
    \bar{v}_\alpha$ such that $v_\alpha$ is the element
    $\rho_\alpha(\bm{\widehat{\lambda}_\alpha})$, for some
    $\bm{\widehat{\lambda}_\alpha} \in \K^{s+1 \choose s - n_z + 1}$.
    \end{theorem}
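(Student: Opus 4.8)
The strategy is to combine the eigenvalue machinery of \Cref{thm:eigenvalues} (and its refinement in \Cref{thm:separateRootsImpliesDiagonalizable}) with the kernel analysis provided by \Cref{thm:rhoIsKoszul} and \Cref{thm:lambdaInKernel}, essentially reversing the construction in the proof of \Cref{thm:eigenvalues}. First I would verify that all hypotheses of \Cref{thm:eigenvalues}, \Cref{thm:separateRootsImpliesDiagonalizable}, \Cref{thm:allTheEigenvaluesAreRoots}, and \Cref{thm:m11invertible} are met: since $\Res_\P^{(2)}(\mon{\theta},f_1,\dots,f_n)\neq 0$ we have $V_\P(\mon{\theta},f_1,\dots,f_n)=\emptyset$, so by \Cref{thm:m11invertible} the block $M_{1,1}(\bm{f_0})$ is invertible (note $\det(M)$ is a constant multiple of $\Res_\P^{(2)}$ by \Cref{thm:goodDegreeVectors}, and the row dimension of $M_{2,2}$ is $\MHB(\bm{f})={r\choose n_y}{s\choose n_z}$ by \Cref{thm:chgOfCoord} and \Cref{eq:degRes}); moreover $M$ has property $\Pi_\theta$ by the argument in \Cref{thm:chgOfCoord}. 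Combined with the hypothesis that $\bm{f}$ has exactly $\MHB(\bm{f})$ distinct solutions and that $\frac{f_0}{\mon{\theta}}$ separates them, \Cref{thm:separateRootsImpliesDiagonalizable} gives that the Schur complement of $M_{2,2}(\bm{f_0})$ is diagonalizable with the claimed eigenvalues.

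The remaining content — identifying the eigenvectors — is where the Koszul structure enters. Fix $\alpha=(\alpha_x,\alpha_y,\alpha_z)\in V_\P(f_1,\dots,f_n)$ with eigenvalue $\lambda:=\frac{f_0}{\mon{\theta}}(\alpha)$, and set $g_0:=f_0-\lambda\cdot\mon{\theta}$, $\bm{g_0}:=(g_0,f_1,\dots,f_n)$, exactly as in the proof of \Cref{thm:eigenvalues}. Since $g_0(\alpha)=0$, we have $\alpha\in V_\P(\bm{g_0})$, so $\delta_1(\bm{g_0},\bm{m})$ is the first map of a non-exact Weyman complex. The key observation is that $\delta_1(\bm{g_0},\bm{m})\circ\rho_\alpha$ is, by \Cref{thm:rhoIsKoszul}, the $0$-graded part of the $(s-n_z+1)$-th Koszul map of $\bm{f_z}=(g_0(\alpha_x,\alpha_y),f_{r+1}(\alpha_x,\alpha_y),\dots,f_n(\alpha_x,\alpha_y))$; since $\alpha_z\in V_\P(\bm{f_z})$, \Cref{thm:lambdaInKernel} (applied to $\bm{g_0}$ in place of $\bm{f_0}$) yields a nonzero $\bm{\widehat{\lambda}_\alpha}$ with $\delta_1(\bm{g_0},\bm{m})(\rho_\alpha(\bm{\widehat{\lambda}_\alpha}))=0$. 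Writing $v_\alpha:=\rho_\alpha(\bm{\widehat{\lambda}_\alpha})\in\ker M(\bm{g_0})$ and applying the block row operation $\bigl[\begin{smallmatrix} I & 0 \\ -M_{2,1}M_{1,1}^{-1} & I\end{smallmatrix}\bigr](\bm{f_0})$ exactly as in \Cref{thm:eigenvalues}, its truncation $\bar v_\alpha:=\bigl[\begin{smallmatrix} 0 & I\end{smallmatrix}\bigr]\cdot v_\alpha$ satisfies $(M_{2,2}-M_{2,1}M_{1,1}^{-1}M_{1,2})(\bm{f_0})\cdot\bar v_\alpha=\lambda\cdot\bar v_\alpha$, i.e. $\bar v_\alpha$ is an eigenvector for $\lambda$. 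Conversely, from such an eigenvector one recovers $v_\alpha$ via $v_\alpha=\bigl[\begin{smallmatrix} M_{1,1}^{-1}M_{2,1} \\ I\end{smallmatrix}\bigr](\bm{f_0})\cdot\bar v_\alpha$ (the first block of $M(\bm{g_0})\cdot v_\alpha=0$ forces this relation, using $M_{1,1}$ invertible), giving the stated formula.

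The one point requiring care — and the main obstacle — is matching \emph{this particular} reconstructed $v_\alpha$ with an element in the image of $\rho_\alpha$, rather than merely exhibiting \emph{some} eigenvector of that form. Since $\frac{f_0}{\mon{\theta}}$ separates the solutions, the eigenvalue $\lambda$ is simple, so its eigenspace is one-dimensional; thus the eigenvector produced by the $\rho_\alpha$-construction above and the eigenvector obtained by extending an arbitrary $\bar v_\alpha$ must be scalar multiples of each other, and since $\rho_\alpha$ is linear, rescaling $\bm{\widehat{\lambda}_\alpha}$ absorbs the scalar. Concretely, I would argue: (i) by \Cref{thm:lambdaInKernel} there is a nonzero $v_\alpha'=\rho_\alpha(\bm{\widehat{\lambda}_\alpha})\in\ker M(\bm{g_0})$ whose truncation $\bar v_\alpha'$ is a $\lambda$-eigenvector of the Schur complement; (ii) $\bar v_\alpha'\neq 0$, because if it were zero then $v_\alpha'\in\ker\bigl[\begin{smallmatrix} 0 & I\end{smallmatrix}\bigr]$ forces the upper block of $M(\bm{g_0})v_\alpha'=0$ to read $M_{1,1}(\bm{f_0})\cdot(\text{upper block of }v_\alpha')=0$, hence $v_\alpha'=0$, a contradiction — here one uses that $\rho_\alpha$ lands in $K_1$ and $M_{1,1}$ is the restriction of $\delta_1$ to the complementary block; (iii) by simplicity of $\lambda$, the eigenvector $\bar v_\alpha$ from \Cref{thm:separateRootsImpliesDiagonalizable} equals $c\cdot\bar v_\alpha'$ for some $c\in\K^\times$, and then $v_\alpha=\bigl[\begin{smallmatrix} M_{1,1}^{-1}M_{2,1} \\ I\end{smallmatrix}\bigr](\bm{f_0})\cdot\bar v_\alpha=c\cdot v_\alpha'=\rho_\alpha(c\cdot\bm{\widehat{\lambda}_\alpha})$, which is of the required form. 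A subsidiary check is that $\delta_1(\bm{g_0},\bm{m})$ and $\delta_1(\bm{f_0},\bm{m})$ have the same $M_{1,1},M_{1,2},M_{2,1}$ blocks (they differ only in the $u_{0,\bm\theta}$-diagonal of $M_{2,2}$ by $\Pi_\theta$), so that all the block manipulations above are consistent between the two systems; this is exactly the computation already carried out at the start of the proof of \Cref{thm:eigenvalues}.
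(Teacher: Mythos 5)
Your proposal is correct and follows essentially the same route as the paper: invoke \Cref{thm:separateRootsImpliesDiagonalizable} for diagonalizability, pass to the perturbed system $\bm{g_\alpha}$, use \Cref{thm:lambdaInKernel} to produce a kernel element $\rho_\alpha(\bm{\widehat{\lambda}_\alpha})$ of $M(\bm{g_\alpha})$, and identify its truncation with the $\lambda$-eigenvector by simplicity of the eigenvalue. You are slightly more careful than the paper in explicitly ruling out $\bar v_\alpha'=0$ via the invertibility of $M_{1,1}$ (step (ii)), a detail the paper leaves implicit; otherwise the arguments coincide.
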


    \begin{proof}
      By \Cref{thm:separateRootsImpliesDiagonalizable}, the Schur
      complex of $M_{2,2}(\bm{f_0})$ is diagonalizable and every
      eigenvalues is different.
      For each $\alpha \in V_\P(\bm{f})$, consider the eigenvalue
      $\frac{f_0}{\mon{\theta}}(\alpha)$, related eigenvector
      $\bar{v}_\alpha$, and the system
      $\bm{g_\alpha} := (f_0 -
      \frac{f_0}{\mon{\theta}}(\alpha),f_1,\dots,f_n)$.
      By \Cref{thm:lambdaInKernel}, there is a
      $\bm{\lambda_\alpha} \in \K$ such that
      $\delta_1(\bm{g_\alpha},\bm{m}) \circ \rho(\bm{\lambda_\alpha}) =
      0$.
      Hence, there is a $w_\alpha$, representing
      $\rho(\bm{\lambda_\alpha}) = 0$, in the kernel of $M(\bm{g_\alpha})$.
      Following the proof of \Cref{thm:eigenvalues}, each element in
      the kernel of the Schur complement of $M_{2,2}(\bm{g_\alpha})$ is
      related to an eigenvector of the Schur complement of
      $M_{2,2}(\bm{f_0})$ with corresponding eigenvalue
      $\frac{f_0}{\mon{\theta}}(\alpha)$.
      As for each eigenvalue we have only one eigenvector, then the
      dimension of this kernel is $1$.
      Hence, the truncation of $w_\alpha$,
      $\bar{w}_\alpha := (0 | I) \cdot w_\alpha$, is a
      multiple of $\bar{v}_\alpha$, where $0$ is the zero
      matrix of appropriate dimension.

      As $M_{1,1}(\bm{g_\alpha})$ is invertible and
      $M(\bm{g_\alpha}) \cdot w_\alpha = 0$, it holds that \linebreak
      $\bigl[\begin{smallmatrix} M_{1,1}^{-1} \cdot M_{2,1} \\
        I \end{smallmatrix}\bigr](\bm{g_\alpha}) \bar{w}_\alpha =~
      w_\alpha$. As $\bigl[\begin{smallmatrix} M_{1,1}^{-1} \cdot M_{2,1} \\
        I \end{smallmatrix}\bigr](\bm{g_\alpha})$ does not involve
      $u_{0,\theta}$, then
      $\bigl[\begin{smallmatrix} M_{1,1}^{-1} \cdot M_{2,1} \\
        I \end{smallmatrix}\bigr](\bm{g_\alpha}) = 
      \bigl[\begin{smallmatrix} M_{1,1}^{-1} \cdot M_{2,1} \\
        I \end{smallmatrix}\bigr](\bm{f_0})$.
      Therefore, we conclude that, as $\bar{v}_\alpha$ is a multiple
      of $\bar{w}_\alpha$, then
      $v_\alpha = \bigl[\begin{smallmatrix} M_{1,1}^{-1} \cdot M_{2,1} \\
        I \end{smallmatrix}\bigr](\bm{f_0}) \cdot \bar{v}_\alpha$ is a
      multiple of $w_\alpha$.
    \end{proof}

    In the following example we use \Cref{thm:eigenvectors} to recover $\alpha_2$.

    \begin{example}[Cont.]

      The eigenvalue of $\frac{f_0}{\mon{\theta}}(\alpha_2) = 1$ is
      $\bar{v}_{\alpha_2} := (1,2)^{\top}$.
      By extending $\bar{v}_{\alpha_2}$, we get $$v_{\alpha_2} :=
      \bigl[\begin{smallmatrix} M_{1,1}^{-1} \cdot M_{2,1} \\
        I \end{smallmatrix}\bigr](\bm{f_0}) \cdot
      \bigl(\begin{smallmatrix} 1 \\ 2 \end{smallmatrix}\bigr) =
      (4, 3, 12, 1, 2, 3, 6, 6, 1, 2)^{\top}$$ which represents $\rho_{\alpha_2}(1,1) =$
    { 
      \begin{multline*}
       \left( \bm{\partial x}^{(1,0)} +3\,\bm{\partial x}^{(0,1)} \right)
        \otimes
        \left( \bm{\partial y}^{(2,0)} +2\,\bm{\partial y}^{(1,1)}
          +4\,\bm{\partial y}^{(0,2)}\right)
        \otimes 
        1 \otimes \bm{e}_{\{0,1,2\}} \\
        +
        \left( \bm{\partial x}^{(1,0)} +3\,\bm{\partial x}^{(0,1)}
        \right)
        \otimes
        \left( \bm{\partial y}^{(1,0)} +2\,\bm{\partial y}^{(0,1)}
        \right)
        \otimes
        1 \otimes \bm{e}_{\{1,2,3\}}
      \end{multline*}  }

\noindent
Hence, {
  $\ev^x_{\alpha_2}(1) = \left( 1\, \bm{\partial x}^{(1,0)}
    +3\,\bm{\partial x}^{(0,1)} \right) $}, and so 
 $\alpha_{2,x} = (1:3) \in \Pr^1$. Also,
{$\ev^y_{\alpha_2}(1) = \left( 1\,\bm{\partial y}^{(1,0)}
    +2\,\bm{\partial y}^{(0,1)} \right) $}, and then
$\alpha_{2,y} = (1:2) \in \Pr^1$.
We note that {
$\ev^y_{\alpha_2}(2) =
        \left(1 \cdot 1 \cdot \bm{\partial y}^{(2,0)} + 1 \cdot 2 \cdot \bm{\partial y}^{(1,1)}
          +2 \cdot 2 \cdot \bm{\partial y}^{(0,2)}\right)$}.

    We can recover $\alpha_{2,z}$ as the solution of
    $\bm{f}(\alpha_{2,x},\alpha_{2,y},\bm{z}) = 0$,
    {
    \begin{align*}
      \begin{cases}
        \; f_1(\alpha_{2,x},\alpha_{2,y},\bm{z}) = 0 \\
        \; f_2(\alpha_{2,x},\alpha_{2,y},\bm{z}) = 0 \\
        \; f_3(\alpha_{2,x},\alpha_{2,y},\bm{z}) =  -9 \, z_0 + 3 \, z_1
      \end{cases}
    \end{align*}
    }
    Hence, $\alpha_{2,z} = (1 : 3) \in \Pr^1$ and so $\alpha_2 = (1\!:\!3 \;;\; 1\!:\!2 \;;\; 1\!:\!3) \in \P$.
  \end{example}
    
  \section{Size of  matrices and FG\lowercase{b}}
  \label{sec:complexity}
  As there are no tight bounds for the complexity of Gr\"{o}bner basis
  algorithms for solving 2-bilinear systems, we compare against our
  algorithms experimentally in \Cref{tab:ratio}.  We consider the state-of-the-art
  Gr\"{o}bner basis implementation, FGb \cite{FGb}. For each set of
  parameters, we consider a random square $2$-bilinear system and we
  dehomogenize the system to compute its Gr\"{o}bner basis. We compared
  the ratio between the size of the maximal matrix appearing in the
  Gr\"{o}bner basis computation and the size of our Koszul resultant
  matrix, for all the cases $n \leq 15$. For reasons of space we only
  present some indicative examples for $n = 12$. The rest of the cases
  can be found in
  \url{http://www-polsys.lip6.fr/~bender/2bilinear/}. The results
  are promising and motivate the study of the structure Koszul
  resultant matrix to develop algorithms for faster linear
  algebra with such matrices.

  \begin{table}[h] 
    \caption{\rm Matrix sizes and ratios of Koszul matrix and FGb.}
    \label{tab:ratio}
    \center
    \begin{tabular}{ | c | c | c | c | c || c | c | c |}
      \hline
      $n_x$ & $n_y$ & $n_z$ & $r$ & $s$ & Size $\delta_1$& Size FGb     & Ratio \\ \hline
      $2$  &  $6$   & $4$  & $7$  & $5$ & $630 \times 630$   & $1769 \times 1158$ & $5.1 \sim$ \\
      $10$ &  $1$   & $1$  & $10$ & $2$ & $352 \times 352$   & $709 \times 422$   & $2.4 \sim$ \\
      $5$  &  $5$   & $2$  & $9$  & $3$ & $6804 \times 6804$ & $8941 \times 8390$ & $1.6 \sim$ \\
      $4$  &  $4$   & $4$  & $6$  & $6$ & $4125 \times 4125$ & $5436 \times 4262$ & $1.3 \sim$ \\
      $5$  &  $5$   & $2$  & $6$  & $6$ & $2106 \times 2106$ & $2007 \times 1164$ & $1/1.9 \sim$ \\
      $6$  &  $3$   & $3$  & $6$  & $6$ & $7000 \times 7000$ & $4708 \times 3801$ & $1/2.7 \sim$ \\
      $6$  &  $4$   & $2$  & $5$  & $7$ & $2450 \times 2450$ & $1773 \times 1125$ & $1/3 \sim$ \\ 
      \hline
    \end{tabular}
  \end{table}

    \paragraph*{Acknowledgments:}
    We thank Laurent Bus\'e and Carlos D'Andrea for helpful
    discussions and references, and the
    anonymous reviewers for the comments and suggestions.
    The authors are partially supported by ANR JCJC GALOP
    (ANR-17-CE40-0009) and the PGMO grant GAMMA.

  \bibliographystyle{alpha}
\bibliography{bibliography}

\newpage
\appendix
\section*{Appendix}

  \begin{proof}[{Proof of \Cref{thm:psiAndEv}}]
    Consider $f = \sum_\sigma c_\sigma
    \bm{x}^{\sigma_x}\bm{y}^{\sigma_y}\bm{z}^{\sigma_z}$. As $\psi$ is
    a bilinear map and the tensor product is multilinear, it is enough
    to prove this lemma only for the monomials
    $\bm{x}^{\sigma_x}\bm{y}^{\sigma_y}\bm{z}^{\sigma_z} \in
    S(\bm{\bar{d}})$.

    , $\psi(\ev^x_{\alpha}(d_x) \otimes
    \ev^y_\alpha(d_y) \otimes g, f) = \sum_\sigma c_\sigma
    \psi({\ev^x_{\alpha}(d_x) \otimes \ev^y_\alpha(d_y) \otimes
    g}, \\ \bm{x}^{\sigma_x}\bm{y}^{\sigma_y}\bm{z}^{\sigma_z})$. For that reason, we study the monomial case,
    
    {
      \begin{multline*}
        \psi(
        \ev^x_{\alpha}(d_x) \otimes \ev^y_{\alpha}(d_y) \otimes g_z,
        \bm{x}^{\sigma_x} \otimes \bm{y}^{\sigma_y} \otimes \bm{z}^{\sigma_z}
        ) = \\
        \Big(
        \bm{x}^{\sigma_x} \star_x \ev^x_\alpha(d_x)
        \Big)
        \otimes
        \Big(
        \bm{y}^{\sigma_y} \star_y \ev^y_\alpha(d_y)
        \Big)
        \otimes
        \Big(
        g_z \cdot \bm{z}^{\sigma_z}
        \Big)
        = \\
        \Big(
        \frac{\bm{x}^{\sigma_x}}{x_0^{\bar{d_x}}}(\alpha_x) \, \ev^x_\alpha(d_x - \bar{d_x})
        \Big)
        \otimes
        \Big(
        \frac{\bm{y}^{\sigma_y}}{y_0^{\bar{d_y}}}(\alpha_y) \, \ev^y_\alpha(d_y-\bar{d_y})
        \Big)
        \otimes
        \Big(
        g_z \cdot \bm{z}^{\sigma_z}
        \Big)
        = \\
        \Big(\ev^x_\alpha(d_x - \bar{d_x})\Big) \otimes \Big(\ev^y_\alpha(d_y - \bar{d_y})\Big) \otimes
        \Big(
        g_z \cdot \frac{\bm{x}^{\sigma_x}}{x_0^{\bar{d_x}}}(\alpha_x)
        \frac{\bm{y}^{\sigma_y}}{y_0^{\bar{d_y}}}(\alpha_y)
        \cdot \bm{z}^{\sigma_z}
        \Big)
      \end{multline*}
    }
    Then, we have
    \begin{multline*} 
      \psi(\ev^x_{\alpha}(d_x) \otimes
      \ev^y_\alpha(d_y) \otimes g, f) = \\
      \sum_\sigma c_\sigma (\ev^x_\alpha(d_x - \bar{d_x}) \otimes \ev^y_\alpha(d_y - \bar{d_y}) \otimes g_z \cdot
      \frac{\bm{x}^{\sigma_x}}{x_0^{\bar{d_x}}}(\alpha_x) \cdot
      \frac{\bm{y}^{\sigma_y}}{y_0^{\bar{d_y}}}(\alpha_y) \cdot \bm{z}^{\sigma_z}) \\
      \ev^x_\alpha(d_x - \bar{d_x}) \otimes \ev^y_\alpha(d_y - \bar{d_y}) \otimes
      g_z \cdot \sum_\sigma c_\sigma
      \frac{\bm{x}^{\sigma_x}}{x_0^{\bar{d_x}}}(\alpha_x) \cdot
      \frac{\bm{y}^{\sigma_y}}{y_0^{\bar{d_y}}}(\alpha_y) \cdot \bm{z}^{\sigma_z} \\
      \ev^x_\alpha(d_x - \bar{d_x})  \otimes \ev^y_\alpha(d_y - \bar{d_y}) \otimes
      g_z \cdot f(\alpha_x,\alpha_y)
    \end{multline*}
  \end{proof}

  \begin{proof}[{Proof of \Cref{thm:rhoIsKoszul}}]

    We split the map $\rho$ as
    $\rho(\bm{\lambda^{(1)}}, \bm{\lambda^{(2)}}) :=
    \rho_\alpha^{(1)}(\bm{\lambda^{(1)}}) +
    \rho_\alpha^{(2)}(\bm{\lambda^{(2)}})$, where
    $\rho_\alpha^{(1)} : \K^{\# \omega^{(1)}} 
    \rightarrow L_{1,1}$, Eq.~\eqref{eq:K1}, such that,
    $$\rho_\alpha^{(1)}(\bm{\lambda^{(1)}}) :=
    \sum\nolimits_{I \in \omega^{(1)}} \Big( \ev^x_{\alpha}(1) \otimes \ev^y_{\alpha}(r-n_y)
    \otimes \lambda^{(1)}_{I} \otimes \bm{e}_I \Big),$$
    and $\rho_\alpha^{(2)} : \K^{\#\omega^{(2)}} 
    \rightarrow L_{1,2}$, Eq.~\eqref{eq:K1}, such that
    $$\rho_\alpha^{(2)}(\bm{\lambda^{(2)}}) :=
    \sum_{J \in \omega^{(2)}} \Big(
    \ev^x_{\alpha}(1) \otimes \ev^y_{\alpha}(r-n_y+1) \otimes
    \lambda^{(2)}_{J} \otimes \bm{e}_J
    \Big).$$

    Both maps are injective.

    As
    $\delta_1(\bm{f_0},\bm{m}) \circ \rho_\alpha =
    \delta_1(\bm{f_0},\bm{m}) \circ \rho_\alpha^{(1)} +
    \delta_1(\bm{f_0},\bm{m}) \circ \rho_\alpha^{(2)}$, we study
    $\delta_1(\bm{f_0},\bm{m}) \circ \rho_\alpha^{(1)}$ and
    $\delta_1(\bm{f_0},\bm{m}) \circ \rho_\alpha^{(2)}$ separately.

    Following the definition of $\delta_1$
    (\Cref{sec:determinantalMap}) we have
      {
      \begin{multline*}
      \delta_1(\bm{f_0},\bm{m}) \circ \rho_\alpha^{(1)} = \\
      \sum_{I \in \omega^{(1)}} \lambda^{(1)}_{I} \delta_1(\bm{f_0},\bm{m})\Big(
      \ev^x_{\alpha}(1) \otimes \ev^y_{\alpha}(r-n_y) \otimes
      1 \otimes \bm{e}_I
      \Big) = \\
      \sum_{I \in \omega^{(1)}} \lambda^{(1)}_I
      \Big(
      \sum_{i = 1}^r (-1)^{i-1}
      \psi(
        \ev^x_{\alpha}(1) \,  \otimes \, \ev^y_\alpha(r-n_y) \otimes  1, f_{I_i}
      ) \otimes \bm{e}_{I \setminus \{I_i\}}  + \\
      \sum_{i = r+1}^{n_x+n_y+1} (-1)^{i-1}
      \psi(
        \ev^x_{\alpha}(1) \,  \otimes \, \ev^y_\alpha(r-n_y) \otimes  1, f_{I_i}
      ) \otimes \bm{e}_{I \setminus \{I_i\}}
      \Big).
      \end{multline*}
      }
      By \Cref{thm:psiAndEv} we have,
      {
      \begin{multline*}
      \delta_1(\bm{f_0},\bm{m}) \circ \rho_\alpha^{(1)} = \\       
      \sum_{I \in \omega^{(1)}}  \lambda^{(1)}_I
      \Big(
      \sum_{i = 1}^r (-1)^{i-1}
      \ev^x_\alpha(0)  \otimes \ev^y_\alpha(r-n_y-1) \otimes
      f_{I_i}(\alpha_x,\alpha_y)
      \otimes \bm{e}_{I \setminus \{I_i\}}  + \\
      \sum_{i = r+1}^{n_x+n_y+1} (-1)^{i-1}
      \ev^x_\alpha(0)  \otimes \ev^y_\alpha(r-n_y) \otimes
      f_{I_i}(\alpha_x,\alpha_y)
      \otimes \bm{e}_{I \setminus \{I_i\}}
      \Big).
    \end{multline*}
  }

  For $i \leq r$, $f_{I_i} \in S(1,1,0)$. Hence
  $f_{I_i}(\alpha_x,\alpha_y) = f_{I_i}(\alpha) = 0$.
      {
      \begin{align*}
      & \delta_1(\bm{f_0},\bm{m}) \circ \rho_\alpha^{(1)} = \\ &
      \sum_{I \in \omega^{(1)}} \lambda^{(1)}_I
      \sum_{i = r+1}^{n_x+n_y+1} (-1)^{i-1}
      \ev^x_\alpha(0)  \otimes \ev^y_\alpha(r-n_y) \otimes
      f_{I_i}(\alpha_x,\alpha_y)
      \otimes \bm{e}_{I \setminus \{I_i\}} = \\ &
      \ev^x_\alpha(0)  \otimes \ev^y_\alpha(r-n_y) \otimes
      \big( \sum_{I \in \omega^{(1)}}
      \sum_{i = r+1}^{n_x+n_y+1} (-1)^{i-1}  \lambda^{(1)}_I f_{I_i}(\alpha_x,\alpha_y)
      \otimes \bm{e}_{I \setminus \{I_i\}} \big).
      \end{align*}
      }
      We conclude that the image of $\delta_1(\bm{f_0},\bm{m}) \circ
      \rho_\alpha^{(1)}$ belongs to $L_{0,2}$.

      Now consider $\delta_1(\bm{f_0},\bm{m}) \circ
      \rho_\alpha^{(2)}$. Following a similar procedure, we deduce

      {
      \begin{multline*}
        \delta_1(\bm{f_0},\bm{m}) \circ \rho_\alpha^{(2)} = \\
        \ev^x_\alpha(0) \otimes \ev^y_\alpha(r-n_y) \otimes
        \sum_{I \in \omega^{(2)}}
        \Big( \lambda^{(2)}_I f_0(\alpha_x,\alpha_y) \otimes \bm{e}_{I
          - \{0\}} \Big) + \\
        \ev^x_\alpha(0) \otimes \ev^y_\alpha(r-n_y-1) \otimes
        \sum_{I \in \omega^{(2)}} \sum_{i = 2}^{r+1}
        \Big( (-1)^{i-1} \lambda^{(2)}_I f_{I_i}(\alpha_x,\alpha_y) \otimes \bm{e}_{I - \{I_i\}} \Big) + \\
        \ev^x_\alpha(0) \otimes \ev^y_\alpha(r-n_y+1) \otimes
        \sum_{I \in \omega^{(2)}} \sum_{i = r+1}^{n_x+n_y+1}
        \Big( (-1)^{i-1} \lambda^{(2)}_I f_{I_i}(\alpha_x,\alpha_y) \otimes \bm{e}_{I - \{I_i\}} \Big)
      \end{multline*}
     }

     For $1 \leq i \leq r+1$, $f_{I_i} \in S(1,1,0)$, so
     $f_{I_i}(\alpha_x,\alpha_y) = f_{I_i}(\alpha) = 0$. Hence,
      {
      \begin{multline*}
        \delta_1(\bm{f_0},\bm{m}) \circ \rho_\alpha^{(2)} = \\
        \ev^x_\alpha(0) \otimes \ev^y_\alpha(r-n_y) \otimes
        \sum_{I \in \omega^{(2)}}
        \Big( \lambda^{(2)}_I f_0(\alpha_x,\alpha_y) \otimes \bm{e}_{I
          - \{0\}} \Big) + \\
        \ev^x_\alpha(0) \otimes \ev^y_\alpha(r-n_y+1) \otimes
        \sum_{I \in \omega^{(2)}} \sum_{i = r+1}^{n_x+n_y+1}
        \Big( (-1)^{i-1} \lambda^{(2)}_I f_{I_i}(\alpha_x,\alpha_y) \otimes \bm{e}_{I - \{I_i\}} \Big)
      \end{multline*}
    } 
    Therefore, the image of
    $\delta_1(\bm{f_0},\bm{m}) \circ \rho_\alpha^{(2)}$ belongs to
    $L_{0,2} \oplus L_{0,4}$.

    We can rewrite
    $\delta_1(\bm{f_0},\bm{m}) \circ \rho_\alpha : \K^{s+1 \choose s
      - n_z + 1} \rightarrow L_{0,2} \oplus L_{0,4}$ as
      \begin{multline*} 
          (\delta_1(\bm{f_0},\bm{m}) \circ \rho_\alpha)(\bm{\lambda}) = 
          \ev^x_{\alpha}(0) \otimes \ev^y_{\alpha}(r-n_y) \otimes P_1(\bm{\lambda}) + \\
          \ev^x_{\alpha}(0) \otimes \ev^y_{\alpha}(r-n_y+1) \otimes (-1)^r P_2(\bm{\lambda})
      \end{multline*}
      where
      {
      \begin{multline*}
        P_1(\bm{\lambda}) := \sum_{I \subset \omega^{(2)}}
        \lambda_{I} f_0(\alpha_x,\alpha_y) \otimes \bm{e}_{I \setminus \{0\}}  + \\ 
        \sum_{J \subset \omega^{(1)}} \sum_{j = 1}^{s-n_z+1}
        (-1)^{j-1} \lambda_{J} f_{J_j}(\alpha_x,\alpha_y) 
        \otimes \bm{e}_{J \setminus \{J_j\}}
      \end{multline*}

      $$P_2(\bm{\lambda}) := \sum_{I \subset \omega^{(2)}}
        \sum_{j = 2}^{s-n_z} (-1)^{r+j-1}
        \lambda_{I} f_{I_j}(\alpha_x,\alpha_y)
        \otimes \bm{e}_{I \setminus \{I_j\}}$$
    }
    
      We observe that the intersection between the image of $P_1$ and
      $- P_2$ is trivial, because $\Im(P_1) \in S_z(1) \otimes
      \bigwedge_{r,s-n_z,0} E$ and $\Im(P_2) \in S_z(1) \otimes
      \bigwedge_{r,s-n_z-1,1} E$. Hence, $P_1 + P_2$ vanishes if and
      only if $P_1$ and $P_2$ vanish. Hence, $\delta_1 \circ
      \rho_\alpha$ is equivalent to the map $\bm{\lambda} \mapsto
      P_1(\bm{\lambda}) + P_2(\bm{\lambda})$. Note that, for all $I
      \in \omega^{(1)} \cup \omega^{(2)}$, $\{1,\dots,r\} \subset
      I$. Therefore, if we expand this map we conclude that it is
      equivalent to the 0-graded part of the $(s-n_z+1)$-th map of the
      Koszul complex of the linear system $\bm{f_z}$.
      \begin{multline*}
        P_1(\bm{\lambda}) + P_2(\bm{\lambda}) =  
        \sum_{\substack{J \subset \{0,r+1,\dots,n\} \\ \#J = s-n_z+1 }}
        \sum_{j = 1}^{s-n_z+1}
        (-1)^{j-1} \lambda_{J} f_{J_j}(\alpha_x,\alpha_y)
        \otimes \bm{e}_{\{1\dots r\} \cup J \setminus \{J_j\}} \qedhere
      \end{multline*}
    \end{proof}
\end{document}